\documentclass[11pt]{article}
\usepackage[english]{babel}
\usepackage{graphicx}
\usepackage{caption}
\usepackage{subcaption}
\usepackage{scrextend}
\usepackage{thmtools}
\usepackage[normalem]{ulem}

\newcommand{\minp}{\rho}
\newcommand{\len}{\ell}
\newcommand{\sizefrac}{\beta}

\newcommand{\minpexp}{3000}
\newcommand{\delexp}{3100}
\newcommand{\alphaexp}{4/3}
\newcommand{\lenexp}{30}
\newcommand{\phiexp}{10}

\newcommand{\minpval}{d^{-60}\eps^{\minpexp}}
\newcommand{\lenval}{d^6 \eps^{-\lenexp}}
\newcommand{\sizefracval}{\eps/10}
\newcommand{\deltaval}{d^{-70}\eps^{\delexp}}
\newcommand{\alphaval}{\frac{\eps^{\alphaexp}}{300,000}}
\newcommand{\phival}{\eps^{\phiexp}}

\newcommand{\trwalk}{\widehat{M}}

\newcommand{\ball}{\mathrm{IB}}
\newcommand{\cluster}{\texttt{cluster}}

\newcommand{\level}[3]{L_{#1, #2, #3}}
\newcommand{\free}[1]{F_{#1}}

\newcommand{\bT}{{\bf T}}

\newcommand{\ST}{K}
\newcommand{\sizethresh}{k}
\newcommand{\ourmodel}{\texttt{General\-{-}Label\-{-}Model}\xspace}
\newcommand{\oursetting}{\texttt{General\-{-}Label\-{-}Setting}\xspace}

\newcommand{\findr}{{\tt findr}\xspace}
\newcommand{\findanchor}{{\tt findAnchor}\xspace}
\newcommand{\findpart}{{\tt find}\-{\tt Partition}\xspace}

\newcommand{\globpart}{{\tt global}\-{\tt Part}\-{\tt it}\-{\tt ion}\xspace}

\newcommand{\clustercode}{
\noindent
\fbox{\begin{minipage}{0.86\textwidth}
    {\cluster$(v, t, k)$}
    \begin{compactenum}
    \item Determine $\trwalk^{t}\vec{v}$
        \item For all $k' \in [k, 2k]$ calculate $\Phi(\level{v}{t}{k'})$.
        \item Find the largest $k' \in [k,2k]$ (if any) with the following properties: $\Phi(\level{v}{t}{k'} \cup \{v\}) \leq \phi$ and $\level{v}{t}{k'} \in \supp(\trwalk^{t}\vec{v})$.
        \item If such a $k'$ exists, set $C := \level{v}{t}{k'} \cup \{v\}$, else $C := \{v\}$.
        \item Return $C$.
    \end{compactenum}
\end{minipage}
}
}

\newcommand{\globpartcode}{
\begin{center}
\fbox{%
  \begin{minipage}{0.85\textwidth}
  \textbf{\globpart$(\bS_1, \bS_2)$}
  
  \medskip
  \textbf{Preprocessing:}
  \begin{compactenum}
    \item For each phase $h = 1$ to $\overline{h}$:
    \begin{compactenum}
      \item Use $\bS_1(h)$ to generate:
      \begin{compactenum}
        \item A $b$-wise independent collection $\{H_v^{(h)}\}_{v \in [N]}$, where $H_v^{(h)} \sim \texttt{Ber}(\delta)$
        \item A $b$-wise independent collection $\{T_v^{(h)}\}_{v \in [N]}$, where $T_v^{(h)} \sim \texttt{Unif}([1,\ell])$
      \end{compactenum}
    \end{compactenum}
  
    \item For each $v \in V$:
    \begin{compactenum}
      \item Set $h_v := \min\{ h \mid H_v^{(h)} = 1 \}$
      \item Set $t_v := T_v^{(h_v)}$
    \end{compactenum}
  
    \item Call \findr$(\bS_2)$ to obtain thresholds $k_1, \ldots, k_t$
    \item Set $k_v := k_{h_v}$ for all $v$
  \end{compactenum}

  \medskip
  \textbf{Partitioning:}
  \begin{compactenum}
    \item Initialize $\cP \gets \emptyset$, $F \gets V$
    \item For each $v \in V$:
    \begin{compactenum}
      \item Let $C = \cluster(v, t_v, k_v)$
      \item Add connected components of $C \cap F$ to $\cP$
      \item Update $F \gets F \setminus C$
    \end{compactenum}
    \item Output $\cP$
  \end{compactenum}
  \end{minipage}
}
\end{center}
}

\newcommand{\findrcode}{
\noindent
\fbox{
\begin{minipage}{0.85\textwidth}
    {\findr$(\bS_2)$}
    \smallskip
    \begin{compactenum}
    \item For $h = 1$ to $\overline{h}$:
    \begin{compactenum}
        \item Sample $\sizefrac^{-10}$ vertices using $\bS_2$.
        Let $S_h$ be the multiset of sampled vertices
        that are in phase $\geq h$.
        \item If $|S_h| \leq \sizefrac^{-9}/2$, set $k_h = 0$ and continue the for loop. Else, reset $S_h$
        to the multiset of the first $\sizefrac^{-8}$ vertices sampled.
        \item For $k \in [\minp^{-1}]$ and for every $s \in S_h$: \label{step:loop}
        \begin{compactenum}
            \item Compute $C := \cluster(s,t_s,k)$.
            \item For all $u \in C$, use a local procedure from \cite{KSS:21} \emph{(which uses no additional randomness)} to check whether $u \in F_{h-1}$.
            \item If $C$ is not a singleton and $|C \cap F_{h-1}| \geq \sizefrac^3 \sizethresh$, mark
            $s$ as being $(h,\sizethresh)$-viable.
        \end{compactenum}
        \item If there exists some $\sizethresh$ such that there are at least $12\sizefrac^{4}|S_h|$ $(h,k)$-viable vertices, assign an arbitrary
        such $\sizethresh$ as $\sizethresh_h$. Else, assign $\sizethresh_h := 0$. \label{step:setthresh}
    \end{compactenum}
    \item Output $\ST_{\overline{h}} = \{\sizethresh_1, \sizethresh_2, \ldots, \sizethresh_{\overline{h}}\}$.
    \end{compactenum}
\medskip
\textit{Remark:} This step consumes only $\poly(d, 1/\eps)$ time and $O(\sizefrac^{-10} \cdot \log n)$ bits of internal randomness from $\bS_2$. Crucially, it is run \emph{once} and its output is reused across all local simulations.
\end{minipage}
}
}

\newcommand{\parameterinfo}{
\begin{itemize}
    \item[] $\minp = \minpval$: Minimum probability for truncation.
    \item[] $\len = \lenval$: Maximum random walk length.
    \item[] $\sizefrac = \sizefracval$: Unclustered fraction cutoff.
    \item[] $\delta = \deltaval$: Phase probability. 
    \item[] $\alpha = \alphaval$: Heavy bucket parameter.
    \item[] $\phi = \phival$: Conductance parameter.
    \item[] $b = 4 \cdot \ell \rho^{-1}$: Limited randomness parameter.
\end{itemize}
}

\usepackage{graphicx}
\usepackage{amsmath,amssymb,amsthm,mathtools}
\usepackage{paralist}
\usepackage{longtable}
\usepackage{colortbl}
\usepackage{hhline}
\usepackage{bm}
\usepackage{xspace}
\usepackage{url}
\usepackage{fullpage, prettyref}
\usepackage{boxedminipage}
\usepackage{wrapfig}
\usepackage{ifthen}
\usepackage{soul}
\usepackage{color}
\usepackage{xcolor}
\usepackage{framed}
\usepackage{algorithmic,algorithm}
\usepackage[pagebackref,letterpaper=true,colorlinks=true,pdfpagemode=none,urlcolor=blue,linkcolor=blue,citecolor=violet,pdfstartview=FitH]{hyperref}
\usepackage{fullpage}
\usepackage{enumitem}   
\usepackage{ulem}

\usepackage{thmtools}
\usepackage{thm-restate}
\newtheorem{theorem}{Theorem}[section]

\newtheorem{claim}[theorem]{Claim}
\newtheorem{corollary}[theorem]{Corollary}
\newtheorem{problem}[theorem]{Open Problem}
\newtheorem{definition}[theorem]{Definition}
\newtheorem{proposition}[theorem]{Proposition}
\newtheorem{fact}[theorem]{Fact}

\newtheorem{observation}[theorem]{Observation}
\newtheorem{remark}[theorem]{Remark}

\newcommand{\ignore}[1]{}

\DeclareMathOperator*{\supp}{supp}

\newcommand{\cE}{{\cal E}}
\newcommand{\cF}{\mathcal{F}}

\newcommand{\cP}{\mathcal{P}}

\newcommand{\cS}{\mathcal{S}}

\newcommand{\eps}{\varepsilon}

\newcommand{\poly}{\mathrm{poly}}

\newcommand{\bone}{{\bf (1)}}
\newcommand{\btwo}{{\bf (2)}}

\newcommand{\bA}{\boldsymbol{A}}

\newcommand{\bR}{\boldsymbol{R}}
\newcommand{\bS}{\boldsymbol{S}}

\newcommand{\MM}{\mathbb{M}}
\newcommand{\NN}{\mathbb{N}}
\newcommand{\RR}{\mathbb{R}}

\newcommand{\EX}{\hbox{\bf E}}

\newcommand{\eqdef}{:=}

\newcommand{\Sec}[1]{\hyperref[sec:#1]{Sec.\,\ref*{sec:#1}}} %section
\newcommand{\Eqn}[1]{\hyperref[eq:#1]{(\ref*{eq:#1})}} %equation
\newcommand{\Fig}[1]{\hyperref[fig:#1]{Fig.\,\ref*{fig:#1}}} %figure
\newcommand{\Tab}[1]{\hyperref[tab:#1]{Tab.\,\ref*{tab:#1}}} %table
\newcommand{\Thm}[1]{\hyperref[thm:#1]{Theorem\,\ref*{thm:#1}}} %theorem
\newcommand{\Fact}[1]{\hyperref[fact:#1]{Fact\,\ref*{fact:#1}}} %fact
\newcommand{\Lem}[1]{\hyperref[lem:#1]{Lemma\,\ref*{lem:#1}}} %lemma
\newcommand{\Prop}[1]{\hyperref[prop:#1]{Prop.~\ref*{prop:#1}}} %property
\newcommand{\Cor}[1]{\hyperref[cor:#1]{Corollary~\ref*{cor:#1}}} %corollary
\newcommand{\Conj}[1]{\hyperref[conj:#1]{Conjecture~\ref*{conj:#1}}} %conjecture
\newcommand{\Def}[1]{\hyperref[def:#1]{Definition~\ref*{def:#1}}} %definition
\newcommand{\Alg}[1]{\hyperref[alg:#1]{Alg.~\ref*{alg:#1}}} %algorithm
\newcommand{\Ex}[1]{\hyperref[ex:#1]{Ex.~\ref*{ex:#1}}} %example
\newcommand{\Clm}[1]{\hyperref[clm:#1]{Claim~\ref*{clm:#1}}} %example
\newcommand{\Obs}[1]{\hyperref[obs:#1]{Observation~\ref*{obs:#1}}} %example
\newcommand{\Step}[1]{\hyperref[step:#1]{Step~\ref*{step:#1}}} %example
\newcommand{\Rem}[1]{\hyperref[rem:#1]{Remark~\ref*{rem:#1}}} %example
\newcommand{\Assumption}[1]{\hyperref[assm:#1]{Assumption\,\ref*{assm:#1}}} %assumption

\def\commentmode{1}

\ifdefined\commentmode

\else

\fi

\def\internalmacros{1}
\ifdefined\internalmacros

\newcommand{\mypr}{\mathbf{Pr}}
\newcommand{\myex}{\hbox{\bf E}}

\newcommand{\psg}{\mathsf{PRG}}

\renewcommand{\Pr}{\mathbf{Pr}}
\renewcommand{\EX}{\hbox{\bf E}}

\title{Reducing the Randomness in Partition Oracles for Bounded Degree Minor-Free Graphs}

\author{Akash Kumar\thanks{Department of Computer Science, IIT Bombay. { \href{mailto:akash@cse.iitb.ac.in}{akash@cse.iitb.ac.in}}} 
\and Abhiruk Lahiri\thanks{CISPA Helmholtz Center for Information Security, Saarbr\"{u}ken and Heinrich Heine University, D\"{u}sseldorf, Germany. {\href{mailto:lahiri.abhiruk@cispa.de}{lahiri.abhiruk@cispa.de}}}
\and C. Seshadhri\thanks{Department of Computer Science, University of California, Santa Cruz. {\href{mailto:sesh@ucsc.edu}{sesh@ucsc.edu}}.
Supported by NSF DMS-2023495, CCF-1740850, 2402572.} 
}

\date{}
\begin{document}

\maketitle
\thispagestyle{empty}

\abstract{
Consider a bounded-degree graph $G$ that belongs to a minor-closed family (such as planar graphs). Such a graph has a hyperfinite decomposition, wherein, for a sufficiently small $\eps > 0$, one can remove $\eps dn$ edges to obtain connected components of size independent of $n$. (As usual, $n$ is the number of vertices and $d$ is the degree bound.)
In a seminal result, Hassidim-Kelner-Nguyen-Onak (FOCS 2009) introduced the 
\emph{partition oracle}, a procedure that provides local access to a hyperfinite decomposition. 
The partition oracle computes the component containing an input vertex $v$ with query complexity (to $G$) \emph{independent} of $n$. Remarkably, this is done without any preprocessing on $G$.
The coordination is done purely through a shared random seed.

Despite a line of work on optimizing the query complexity of partition oracles, there were no attempts
to bound the size of the random seed. All existing partition oracles use a random seed
of size $\Omega(n)$, which technically implies a linear setup time. Any blackbox derandomization would likely need $\Omega(\log^2n)$ uniform random bits. A natural question is whether the random seed can also have length independent of $n$. 

We prove the $\poly(d\eps^{-1})$-query partition oracles of Kumar-Seshadhri-Stolman can be implemented with a random seed of $\poly(d\eps^{-1}) \cdot \log n$ length.
To get a deeper understanding on the randomness complexity, we consider a more general model where the vertex labels come from the universe $[N]$, where $N \geq n$. In this setting, we prove that any partition oracle even for cycles requires $\omega_N(1)$ random bits.
}

\section{Introduction} \label{sec:introv2}
The study of planar and minor-free graph classes occupies a central place in both structural and algorithmic graph theory. Foundational results by Kuratowski and Wagner characterized planarity through
forbidden minors~\cite{K30, W37}. 
A generalization of these ideas was achieved through the celebrated Graph Minor Theorem of Robertson and Seymour \cite{RS:12, RS:13, RS:20}. On the algorithmic side, 
the linear time planarity algorithm of Hopcroft-Tarjan is a classic result~\cite{HT74}. The classic separator theorems of Lipton-Tarjan and Alon-Seymour-Thomas are standard algorithmic tools for minor-free classes of graphs~\cite{LiptonT:80, AST:94}.
An important implication of these separator theorems is that bounded degree minor-free graphs are \emph{hyperfinite}. For any $\eps > 0$,
one can remove an $\eps$-fraction of the vertices to get connected components of size $O(\eps^{-2})$. 

Minor-freeness and hyperfiniteness have inspired a rich line of work in sublinear algorithms. A seminal 
result of Hassidim-Kelner-Nguyen-Onak introduced the notion of partition oracles for minor-free graph classes~\cite{HKNO}. 
Consider an input graph $G$ with query access to its adjacency list.  A partition oracle is a local, constant query procedure that, given a vertex $v$, returns the connected component of $v$ in a fixed consistent hyperfinite decomposition.
There is no preprocessing of $G$, and all the coordination is done through a fixed random seed. 
At the face of it, it is remarkable that partition oracles even exist;
the proof of hyperfiniteness goes via a global algorithm that recursively applies the planar (or minor-free) separator theorem.
A partition oracle must effectively obtain each piece of this (global) decomposition by only inspecting constant-sized neighborhoods.
Partition oracles are a key primitive in sublinear algorithms on bounded-degree graphs, and can be used
to derive general constant-query testability results for a large class of properties (eg, see Theorem 1.4 in \cite{KSS:21}).
Partition oracles give access to a global object with only local queries, and is thus a classic example
of a Local Computation Algorithm (LCA) \cite{AlonRVX12, RubinfeldTVX11}. 

Despite further work on optimizing the query complexity of partition oracles~\cite{LR15, KSS:21}, all implementations
use a random seed of size $\Omega(n)$. From the perspective of an LCA, the storage for the partition oracle
is (at least) the random seed, and is hence $\Omega(n)$. This somewhat contradicts to sublinear nature of partition oracles, since
linear set up time is required to simply write down this seed. 
Previous work of Alon-Rubinfeld-Vardi-Xie \cite{AlonRVX12} make randomness (and hence space) reduction arguments
for LCAs, and it is likely their tools could be applied. As we discuss later, such arguments that the use
of small-space pseudorandom generators or bit sequences with limited independence can only reduce the random seed 
to size $\Theta(\log^2n)$. There are fundamental reasons why reducing further requires non-blackbox methods.

On the other hand, it is unclear whether a constant number of random bits suffices for partition oracles.
At an extreme, consider a deterministic partition oracle that computes the MD5 hash of its input, and uses
that to simulate a randomized algorithm. It is non-trivial to prove that such an algorithm would fail.
Indeed, maybe augmenting this algorithm with a constant number of random bits (depending only on $\eps$)
would suffice. Given that the query complexity is constant, it seems challenging to prove that $\omega(1)$ bits
of pure randomness are needed.

Our main motivating question is: what is the optimal length of the random seed for partition oracles? 
How much randomness is actually required to implement a partition oracle?

\subsection{Main Result} \label{sec:resultv2}

The standard access model for bounded-degree graphs was introduced by Goldreich-Ron~\cite{GR02}.
Consider an input graph $G = (V,E)$ with $n$ vertices and degree bound $d$. 
Assume that $V = [n]$. The graph is accessed through adjacency list, or neighbor queries:
given $i \leq n$ and $r \leq d$, the oracle outputs the $r$th neighbor of vertex $i$ (or 
returns $\bot$ if it does not exist).  
 
\medskip

We state the definition of a partition oracle, following the notation of \cite{KSS:21}.

\begin{definition} \label{def:oracle} 
Let $\cP$ be a family of graphs with degree bound $d$, and $T: \NN \times (0,1) \to \NN$, $\ell:\NN^2 \times (0,1) \to \NN$ be functions. A procedure $\bA$ is an 
\emph{$(\eps, T(d,\eps), \ell(n,d,\eps))$-partition oracle} for $\cP$ if it satisfies the following properties. The (randomized) procedure takes as 
input random access to $G = (V,E) \in \cP$ (with $n$ vertices), a uniform random seed $r$ of length $\ell(n, d, \eps)$, a proximity parameter $\eps > 0$, and a vertex $v \in V$. (Fixing $G, r, \eps$, we write $\bA_{G,r,\eps}$; all probabilities are over $r$.) The output $\bA_{G,r,\eps}(v)$ is a subset of $V$, and:
\begin{enumerate}
    \item (Consistency) The sets $\{\bA_{G,r,\eps}(v)\}_{v \in V}$ form a partition 
    of $V$, and each set induces a connected subgraph.
    \item (Cut bound) With probability at least $2/3$ over $r$, the number of edges 
    crossing the partition is at most $\eps dn$.
    \item (Query complexity) For every $v$, the procedure $\bA_{G,r,\eps}(v)$ has query complexity $T(d,\eps)$.
\end{enumerate}
\end{definition}

We note that the size of all the sets is at most $T(d,\eps)$, and standard arguments shows that 
the size can be made $O(\eps^{-2})$ \cite{G17-book}. Typically, both $d$ and $\eps$ are treated as constants
(or at least, independent of the size $n$). The seminal work of Hassidim-Kelner-Nguyen-Onak gave a partition oracle for
all minor-closed families with $T(d,\eps) = (d/\eps)^{\poly(d/\eps)}$~\cite{HKNO}. Levi-Ron gave
an improved partition oracle with $T(d,\eps) = (d/\eps)^{\log(d/\eps)}$~\cite{LR15}. Kumar-Seshadhri-Stolman
proved the existence of partition oracles with $\poly(d/\eps)$ query complexity~\cite{KSS:21}. All
of these results have $\ell(n,d,\eps) = \Omega(n)$, and hence require linear randomness.

We prove that the randomness can be reduced to just $O(\log n)$ bits. Note that such a small
seed only suffices to sample a constant number of uniform random vertices. This theorem
is proven by a detailed inspection of the partition oracle of Kumar-Seshadhri-Stolman (henceforth KSS)~\cite{KSS:21}.

\begin{theorem} \label{thm:main}
Let $\cP$ be the set of $d$-bounded degree graphs in a minor-closed family. There exists an $(\eps, \poly(d/\eps), \poly(d/\eps) \cdot \log n)$-partition oracle for $\cP$.
\end{theorem}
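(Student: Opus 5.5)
We will prove \Thm{main} by re-implementing the KSS partition oracle~\cite{KSS:21} in the form of \globpart, with two kinds of seed. $\bS_1$ generates, for each phase $h \le \overline{h}$, a $b$-wise independent family of Bernoulli phase indicators $H^{(h)}_v$ and uniform walk lengths $T^{(h)}_v$, with $b = 4\ell\rho^{-1}$. $\bS_2$ drives \findr, which samples $\beta^{-10}$ vertices per phase. Since $\overline{h}$, $b$ and the sample sizes are all $\poly(d/\eps)$, standard $b$-wise independent constructions (polynomials of degree $b$ over a field of size $\ge N$) need $O(b\log n)$ bits per phase. \findr needs $O(\beta^{-10}\log n)$ bits. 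The total seed length is therefore $\poly(d/\eps)\cdot\log n$. Consistency and connectivity hold deterministically: the partitioning loop assigns each vertex to a connected component of $C\cap F$ exactly once. Local simulation, and hence the query bound, follows from the KSS argument that whether $u\in F_{h-1}$ can be decided locally. This holds because the random choices are fixed functions of the seed, evaluable per vertex, and the thresholds $k_h$ are computed once from $\bS_2$ and shared. So only the cut bound needs new work.

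For the cut bound, we go through the KSS analysis phase by phase and isolate every probabilistic statement it uses. In each phase the analysis needs the following. (i) The expected number of cut edges charged to phase $h$ is small. This is a sum over vertices $v$ of events determined by $\cluster(v,t_v,k_v)$, whose support lies in the truncated walk support of size at most $\rho^{-1}$ for each of $\ell$ lengths. (ii) A constant fraction of the remaining free vertices get clustered, driven by the heavy-bucket and low-conductance arguments. (iii) \findr picks a good threshold $k_h$ with high probability.

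For (i) and (ii), the key observation is that every event involved depends on the randomness of at most $b/2 = 2\ell\rho^{-1}$ vertices. The choice of $b = 4\ell\rho^{-1}$ is what makes these vertex sets small enough. Hence expectations of such events, and of pairwise products of them, coincide exactly with their fully independent values. Linearity of expectation then gives the same expected cut size as in KSS. Where KSS uses concentration, we replace it with a second-moment (Chebyshev) bound. This is valid because pairwise products involve at most $b$ vertices, and the cost is only $\poly(d/\eps)$ slack, which the stated parameter choices ($\delta$, $\rho$, $\alpha$) can absorb.

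For (iii), we treat the samples in \findr as $\poly(\beta^{-1})$ independent uniform vertices drawn with fresh bits from $\bS_2$. A Chernoff bound conditioned on $\bS_1$ shows that the empirical fraction of $(h,k)$-viable vertices is within $\beta^4$ of the true fraction for all $k\le\rho^{-1}$. A union bound over phases and values of $k$ then gives total failure probability at most $1/10$. Finally, we combine the three items by Markov's inequality to bound the cut by $\eps dn$ with probability $2/3$.

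The main obstacle is step (ii), the progress argument. In KSS it is a global statement: across phases, a random vertex becomes clustered with constant probability. Its proof uses that the vertices choosing phase $h$ form a random subset independent of earlier phases, together with interactions between a vertex's cluster and the free set $F_{h-1}$. Here $F_{h-1}$ depends on the randomness of many vertices. We plan to handle this by making the phases use independent seeds, so $\bS_1(h)$ is independent of $F_{h-1}$, and conditioning on $F_{h-1}$. The remaining dependence within phase $h$ is then local, of size at most $b$, and the Chebyshev argument applies. We expect this conditioning to be the delicate part of the proof.
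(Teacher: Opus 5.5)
Your plan is essentially the paper's proof. You run the KSS \globpart with independent per-phase seeds generating $b$-wise independent $H^{(h)}_v, T^{(h)}_v$ for $b=4\ell\rho^{-1}$ (chosen to dominate the inverse-ball bound $|\ball(u)|\le \ell/\rho$ of \Clm{ballsize}), analyze \findr by a Chernoff bound over its fresh samples from $\bS_2$, replace KSS's concentration steps by limited-independence tail bounds (Chebyshev in your case, the $b$-th moment bound of \Fact{kwise:chernoff} in the paper's), and count $\poly(d/\eps)\cdot\log n$ seed bits.

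The one adjustment is of emphasis. The progress step you flag is not the hard part: every processed $v$ leaves $F$, so $F_h\subseteq V_{\ge h}$, and the isoperimetric input \Thm{enough_h_t} is deterministic for any free set of size at least $\beta n$. Where $b$-wise locality is really used is the charging bound of \Clm{newcharging}: given the earlier phases and $k_h$, the number of $v\in V_h$ with $u\in\cluster(v)$ depends only on the phase-$h$ variables of the at most $\ell/\rho\le b/4$ vertices of $\ball(u)$.
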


We complement this theorem with a lower bound the random seed length in a more general model. Consider the standard Goldreich-Ron setup,
where the labels on $V$ are unknown to the algorithm--the algorithm only knows that all vertex labels come from the set $\{1,2,\ldots, N\}$. There is a \emph{vertex query}
that, given $i \leq n$, fetches the $i$th element in $V$ (according to some arbitrary ordering).
The neighbor query takes as input (the label of) an element in $V$ and a neighbor index $r \leq d$, and outputs
the (label of) the $r$th neighbor of $v$. 

We call this the \emph{\ourmodel}. (We actually prove \Thm{main} in \ourmodel with an $O(\log N)$ bound.) In this model, we can prove a super-constant lower bound for the randomness required for a partition oracle for cycles.

\begin{theorem} \label{thm:main:v3} Consider the \ourmodel with label size being $N$. Let $\eps > 0$ be sufficiently small. Consider any $(\eps, T(d,\eps), \ell(n,d,\eps))$-partition oracle that works for all $n$ cycles where $n \leq N$. 
Then $\ell(n,d,\eps) = \omega_N(1)$.
\end{theorem}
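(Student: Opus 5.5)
We argue by contradiction: suppose that for some constant $c$ (independent of $N$) and infinitely many $N$, there is a partition oracle using at most $c$ random bits that works for every cycle with labels in $[N]$. Since the seed takes at most $2^c$ values, we view the oracle as a family of $2^c$ deterministic procedures $\bA_1,\ldots,\bA_{2^c}$, each making at most $T = T(2,\eps)$ queries. The plan is to build, by a Ramsey-type argument over the label universe, a single cycle on which every one of these deterministic procedures cuts more than $2\eps n$ edges. Then the cut bound fails with probability $1 > 1/3$.

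\textbf{Step 1: encode behaviour as a coloring of label tuples.} In a cycle, the $T$-radius view of a vertex $v$ is a path of labels $(u_{-T},\ldots,u_0=v,\ldots,u_T)$. I will handle the vertex query either by fixing a canonical ordering determined by the labels, or by arguing that vertex queries return labels at distance $>T$ which are irrelevant to the answer. For each seed $j$, the output $\bA_j(v)$, restricted to whether the edge $(u_0,u_1)$ is cut, is a function of the $(2T+2)$-tuple of labels in the view. (Consistency forces cutting to be well defined: edge $(u_0,u_1)$ is cut iff $u_1\notin \bA_j(u_0)$.) Color each increasing $(2T+2)$-subset $S\subseteq[N]$, placed in increasing order along the path, by the vector of $2^c$ bits recording cut/uncut for each seed. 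This uses at most $2^{2^c}$ colors.

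\textbf{Step 2: apply hypergraph Ramsey.} Fix $n$ large, say $n \gg 1/\eps$, with $n$ a constant depending on $\eps,T,c$. Since $N\to\infty$, the hypergraph Ramsey theorem yields a set $X\subseteq[N]$ with $|X|\ge n$ that is monochromatic for $(2T+2)$-subsets. Let $C$ be the cycle on $X$ whose labels increase along the cycle, except at one wraparound edge. For all edges at distance $>T+1$ from the wrap, the local view is an increasing tuple from $X$, so each seed $j$ gives the same verdict on all these edges. If seed $j$ says ``uncut'', then consistency together with connectivity forces a component containing more than $n-O(T)$ vertices. This contradicts the query bound, since components have size at most $T$. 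So every seed cuts all these $n-O(T)$ edges, far more than $2\eps n$.

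\textbf{Main obstacle.} The delicate part is Step 1: making sure the cut decision for edge $(u_0,u_1)$ really depends only on the ordered label pattern within radius $T$, including the effect of vertex queries and of the neighbor-port ordering. I would fix the adversary's port numbering so that port 1 always points to the increasing direction. I would also let vertex queries return the vertex with index $i$ in a fixed order whose answers can be absorbed into the coloring: these are a bounded number of additional labels, so I would enlarge the tuple to include them and still apply Ramsey. Doing this consistently for all seeds at once is what forces $\ell$ to be unbounded as $N\to\infty$.
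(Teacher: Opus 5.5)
Your argument is sound for oracles that use only neighbor queries. The gap is the step you yourself flag as the main obstacle, and neither of your proposed fixes closes it.

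In the \ourmodel the oracle may issue label queries whose index is an \emph{arbitrary} function of the labels seen so far (e.g.\ a hash of $u_0$). It may then compare the returned far-away labels with its local ones. So the verdict on $(u_0,u_1)$ is not a function of the radius-$T$ view, and the returned labels are not ``irrelevant''. For example, a rule such as ``query index $1$, obtain $y$, and cut $(u_0,u_1)$ iff $u_0<y<u_1$'' depends on a label that may lie at distance $\gg T$.

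Absorbing the first queried index into your coloring does make every increasing view query the same index $i_1$. But the returned label $y$ then acts as a fixed landmark, and the order of $y$ relative to the local labels changes as $u_0$ moves past it. Hence your Step 2 conclusion, that each seed gives the same verdict on \emph{all} edges away from the wrap, is false. Worse, the second index is a function of the pair of views $(B(v),B(y))$, where $y$ itself depends on the Ramsey set $X$. A single Ramsey application to increasing $(2T+2)$-sets laid out in order cannot control this. You need a coloring that records the function's value on every order type of the combined tuple.

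The missing idea, which is how the paper handles it, has four parts (write $q$ for the query bound, your $T$):
\begin{enumerate}
\item Apply Ramsey to all index-choosing functions $f_z$ of all $2^c$ decision trees simultaneously, making each comparison-based on a label set $C$. Each node then has fanout at most $(q(2q+1))!$, so the total number of indices ever queried, over all seeds and all branches, is at most $2^c q^{O(q^3)}$. This constant must be $\ll n$ (\Clm{indexv2}), which is a second place where $c=O(1)$ is used beyond your color count.
\item Label the cycle increasingly by $C$. These indices then single out a bounded set of landmark vertices.
\item On each maximal monotone run at distance $>2q$ from all landmarks and from the wrap, the order type of everything the oracle sees is identical. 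So each tree gives a constant verdict on that run (\Clm{dist-chunk}).
\item Each run is longer than the component bound, so it must be cut entirely. The runs cover all but $O(2^c q^{O(q^3)}\cdot q)$ vertices (\Clm{chunk}).
\end{enumerate}
Replace your ``one long monotone run'' by ``boundedly many runs between landmarks'', and your Step 2 argument then goes through.
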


\subsection{Main Ideas and Challenges} \label{sec:ideas}

All existing partition oracles can be thought to ``store" a uniform random $O(\log n)$ length string at every vertex. 
In the case of the earlier partition oracles, this is just a random hash which is used to compare
vertices \cite{HKNO}. For KSS, there is a complex list of vertex-centric parameters that define
``diffusion clusters". When the partition oracle
visits a set of vertices, it uses the stored random strings for its internal randomness. Because
the random string is fixed for every vertex, this ensures consistency when the partition oracle
is called from different vertices. In the discussion below, we treat both $d$ and $\eps$ as constants
and ignore their dependencies.

{\bf Upper bound:} Let us explain how a black box randomness reduction approach would work. Since
the partition oracle has constant query complexity, it only reads $O(\log n)$ bits of randomness,
and only needs $O(\log n)$ bits of storage. Intuitively, as long as these $O(\log n))$ bits are independent,
an individual call to the partition oracle will behave exactly as if the entire source of randomness is fully independent.
This suggests that we can replace entire (uniform) random seed to a sequence of $O(\log n)$-wise independent bits.
Note that this is \emph{not} enough. The cut guarantee of the partition oracles involves global behavior
over all the vertices. With an $O(\log n)$-wise independent bit string, we do not get independent behavior 
for calls to different vertices. Nevertheless, one might hope that the analysis can be carried through.
This is the approach followed in previous work on randomness/space reduction for LCAs \cite{RubinfeldTVX11}.

So we need a $\poly(n)$ length bit string that is $O(\log n)$-wise independent. Standard pseudorandom generators (PRGs)
can be used to generate such sequences, with a uniform random seed of $O(\log^2n)$ length \cite{Nisan90} (Theorem 1). This
bound is also optimal, so we cannot hope for more reduction. An alternate, but morally equivalent,
perspective is to use Nisan's small space PRG \cite{Nisan90}. The partition oracle is a small space algorithm,
and we would need to construct a small space algorithm that verifies the partition oracle guarantees. Again,
we require $\Omega(n)$ bits in total for an $O(\log n)$-space algorithm; which leads to a seed length of $O(\log^2n)$.

To get a smaller seed length requires a different approach. If we wanted an $O(\log n)$-length (uniform) seed,
we can only afford $O(1)$-wise independence. That seems extremely limiting. The probability of any ``bad" event
can only be upper bounded by a constant, so we cannot union bound over the $n$ possible inputs (vertices) to the 
partition oracle. With $O(\log n)$-wise independence, Chernoff bounds would lead to $1/\poly(n)$ error bounds,
which can be used with union bounds to get the global guarantees of partition oracles. With $O(1)$-wise independence,
there is no hope of any union bound.

Our discovery is that, by a careful and detailed inspection, the KSS analysis can be made to go through with $O(1)$-wise
independence. We note that this is somewhat remarkable, since we cannot afford any union bounds over all the vertices.
We believe that the main contribution here is the actual discovery, rather than the analysis. Once we get some basic observations
done, it is mostly a tedious (but somewhat routine) calculation of various steps of the KSS proof.

{\bf Lower bound:} It is surprisingly tricky to prove lower bounds for the random seed length. 
As mentioned earlier, consider the following deterministic construction for these strings. Simply
compute the MD5 (or some cryptographic) hash of the vertex label. These hashes can be used
by an randomized partition oracle for its internal randomness. In general, we would expect
this algorithm to work for most inputs. Indeed, the only way to ``defeat" this algorithm would
be a carefully constructed instance that exploits dependencies among the MD5 hashes. But such an instance seems
nearly impossible to construct. 
This argument also shows that we cannot just compute a single hard family or distribution, and any lower
bound has to exploit the structure of a specific deterministic algorithm. 

The standard tool for lower bounds for randomized algorithms is Yao's minimax lemma. Yet it does
not seem to help to actually bound the length of the random seed. One of the conceptual struggles
the authors faced was in trying to construct some framework that leads to lower bounds for random seed
length.

We take inspiration from a somewhat unlikely source: query lower bounds for monotonicity testing.
Fischer \cite{Fischer04} gave a beautiful application of Ramsey theory to show that adaptivity does
not help for monotonicity testing on the line. Chakrabarty-Seshadhri extended that argument
to other domains \cite{ChakrabartyS13}. The essence of the Ramsey theory argument is that for any algorithm,
there exists a restriction of possible inputs where the algorithm behaves like it is ``comparison-based".
Note that this restriction is fundamentally based on the specific algorithm.

While we are not looking for query lower bounds, we are able to exploit this key idea for randomness
lower bounds. For any (even randomized) partition oracle, the Ramsey theory argument shows that existence
of some set of vertex labels such that algorithm can only compare these labels. This is exactly where we need
to choose vertex labels from a large enough set for the Ramsey theory to kick in. In turn, this leads us to the \emph{\ourmodel}, which we think is an interesting model in its own right. It leads one to wonder to what extent are the standard property testers reliant on the label set being $[n]$ (i.e. identical to the vertex set). We use the above fact to show that,
with limited randomness, the partition oracle must have the same behavior on various portions of the graph. With 
a careful accounting, we give a relation between the amount of randomness and the ability to only
cut $O(\eps n)$ edges. Overall, we can prove that with $\ll \lg n$ random bits, the algorithm is forced
to cut too many edges.

We consider this result to be a compelling conceptual contribution. It would be of interest to see
if this methodology can be used to other LCA applications. We note that our argument requires
the possible vertex labels to come from an astronomically large set. This leaves the following interesting open question.

\begin{problem} \label{prob:mainv2} Do there exist random seed length lower bounds for partition oracles, where the vertex set is fixed to be $[n]$?
\end{problem}

\subsection{Related Work} \label{sec:relatedv2}

In Chapter 9 of \cite{Goldreich:11}, Goldreich overviews \cite{HKNO} and provides a simple implementation of partition oracle where query complexity is exponential in $\poly(1/\eps)$. \cite{HKNO} observed that this primitive can be used to obtain testers all minor-closed properties on bounded degree graphs with two-sided error with query bound effectively being the same as the query complexity of the partition oracle implementation. In a series of works, \cite{KSS:18, KSS:19} sidestepped the partition oracle primitive and directly developed random-walk based testers for minor-freeness on bounded degree graphs.

In another important line of work, building up on techniques from \cite{HKNO}, Levi and Ron \cite{LR15} obtained an improved implementation of a partition oracle which runs in time quasi-polynomial in $1/\eps$. \cite{KSS:21} gave the first implementation of partition oracles which runs in time $\poly(1/\eps)$. \cite{LeviS21} repurposed the partition oracles from \cite{KSS:21} to give a polynomial time implementation of ``covering oracles'' and showed how this primitive can be used to efficiently test Hamiltonicity in planar graphs.

As noted earlier, all of the foregoing implementations of a partition oracle used $O(n \log n)$ bits of randomness. Thus, the task of setting up the partition oracle requires an intricate coordination among the vertices and this defeats the central point of LCAs \cite{AlonRVX12, RubinfeldTVX11}. On the algorithmic side, the central thrust of this work lies in managing the randomness complexity. 

On the lower bound side, which is the main contribution of the paper, the goal is to understand just how little randomness can one get away with when implementing a partition oracle with low query-complexity. Motivated by the original insights of \cite{Fischer04, ChakrabartyS13}, the central idea here is to use Ramsey Theory arguments which allow us to consider a \emph{canonical} implementation of partition oracles (which can \emph{simulate} any partition oracle with at most a polynomial overhead). Lower Bound arguments with roots in Ramsey Theory have been used since in Distribution Testing literature as well \cite{DKN15, CanonneW20, pinto23}. 

\section{The Lower Bound} \label{sec:lb:v2}

We first describe the \emph{\ourmodel}. 

\begin{definition} \label{def:huge} 
    The \ourmodel decouples the set of vertices of the input graph and the set from which labels for the vertices are chosen. The label of a vertex $\ell(v)$ comes from $[N] \subset \NN$ where the value $N$ is known to the algorithm under this model. Additionally, the algorithms also know $n = g^{-1}(N) \ll N$ (for a sufficiently slowly growing function $g$) and $d$. Here, $n$ denotes the number of vertices in the input graph and $d$ is an upperbound on the maximum degree. The \ourmodel also supports the following queries.

\begin{itemize}
\item Label queries: given an index in $i \in [n]$, output the $i$th vertex label, (eg, output $\ell(i)$) among an arbitrary order of all vertices.
\item Neighbor queries: given vertex $v$ and index $r \leq d$, return the (label of) the $r$th neighbor of the vertex $v$.
\end{itemize}
\end{definition}

\medskip

Note that this model is more challenging, since the partition oracle gets its input from $[N]$. If the input is (say) some natural number $v$, then unlike with the standard property testing model, the partition oracle has little idea about what the remaining labels could be. Nevertheless, intuitively, the partition oracle had better have a well-defined behavior for all possible inputs. 

In the rest of our discussion, the input graph $G$ is fixed to be an $n$-cycle.
The only variable (for the algorithm) is the labeling function $L: V \to \NN$.
We use the term ``vertex label", rather than ``vertex", to emphasize this fact.
The lower bound idea shows that achieving the global consensus from a local algorithm
requires $\Omega(\log n)$ bits. So the difficulty is not discovering the graph structure per se, but
rather in agreeing upon the cut edges. We refer to a vertex adjacent to a cut edge as a \emph{cut vertex}. Note that in an $n$-cycle a cut-edge results in two cut-vertices at either ends.

\subsection{Decision Tree Partition Oracles} \label{sec:canon}

With all this terminology in place, we now describe a partition oracle in terms of a ``cut function". Instead of thinking of the explicit connected components
that a partition oracle outputs, we will ask for a function that just determines which vertices are cut. For the cycle graph, it is easy to reconstruct the component a vertex belongs to by continuously querying neighbors and checking if they are cut. This does not affect any of the bounds by more
than a constant factor, and eases the presentation. Furthermore, it will be convenient
to have a single parameter that encompasses the query complexity and the size bound.

\begin{definition} \label{def:cut} A \emph{$q$-query} cut function partition oracle
is an algorithm that takes as input a vertex label $\ell$ and outputs a boolean value, where a value of $1$ represents the vertex being cut.
The algorithm makes at most $q$ queries to the graph and guarantees that all connected components have size at most $q$.
\end{definition}

(Note that typically, both $q$ and the component size are some functions of $\eps$. In this definition,
we simply bound the component size by the query complexity.)
Henceforth, when we refer to a partition oracle, we mean the cut function representation. So we drop the ``cut function"
terminology.
Note that query constraints must hold for all vertex labels and all possible labelings of $G$. 

We now define a sort of canonical partition oracle. The algorithm maintains a set of seed labels, initialized
to the input label. For each seed label $\ell$, the algorithm makes repeated neighbor queries to determine all labels
of vertices at distance at most $q$ from $L^{-1}(\ell)$. Based on all of these labels and the subgraph induced
by the corresponding vertices, the algorithm computes an index $i$. It then makes a label query to get a new seed label,
and repeats the above. This process is repeated $q$ times, after which the algorithm gives its output.
We refer to such a procedure as a canonical $q$-query partition oracle.

Observe that a canonical $q$-query partition oracle makes at most $O(q^2)$ queries; there are a total
of $q$ seeds, and from each seed, the oracles makes at most $2q$ neighbor queries. Note that subgraph
induced by the queried vertices is a collection of disjoint segments.

We now represent a canonical partition oracle as a decision tree $T$. Each node $z$ contains a function,
and the edges contain indices in $[n]$. (We use the word ``contains" instead of ``labeled", to avoid
confusion with the vertex labels of the graph.) Let $d_z$ denote the depth of node $z$, where
the root has depth $1$. The node $z$ contains a function $$f_z: \NN^{d_z(2q+1)} \to [n].$$

We think
of the input to $f_z$ as $d_z$ segments, each of length $2q+1$. The $r$th segment has the vertex labels
of the interval (in $G$) of length $2q+1$, corresponding to all vertices at distance at most $q$
from the $r$th seed vertex. The output of this function is the index for the next label query.

For clarity, let us explain how the overall output is generated. For a label $\ell$, let $B(\ell)$
denote the list of labels of all vertices at distance at most $q$ from (the vertex corresponding to) $\ell$.
The root node $y$ takes in the input
vertex label $\ell$. The partition oracle performs neighbor queries to get all labels $B(\ell)$. These
$2q+1$ labels are the arguments to the function $f_y$, which gives the index for the next label query.
The partition oracle determines this vertex label and proceeds. In general, consider a tree node $z$
at depth $d_z$. There are $d_z$ seeds that have been collected up to this point. For each such seed $\ell$,
the partition oracle determines the lists $B(\ell)$. Each of these, in order, forms the arguments
to the function $f_z$, which outputs the index of the next seed.

Note that the tree has depth at most $q$ and each node has fanout at most $n^{O(q^2)}$ (the number of possible indices).
So the overall representation is polynomial in $n$, even though the number of queries made is $q$ (constant). The key takeaway from this discussion is written in the following observation.

\begin{observation} \label{obs:trees-are-funcs}
    Fix a $q$-query canonical partition oracle represented as a decision tree $T$. Then $T$ can be alternately represented by a finite collection of $O(q^2)$ functions $f_z \colon \NN^{d_z(2q+1)} \to [n]$.
\end{observation}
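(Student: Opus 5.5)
The plan is to read the observation off the definition of the canonical oracle. The only adaptive choices a canonical $q$-query partition oracle makes are the indices of its label queries. Everything else is forced. Given a seed label $\ell$, the oracle queries neighbors until it has the list $B(\ell)$ of labels within distance $q$. Because $G$ is a cycle, this exploration has a fixed shape and the ordered list $B(\ell)$ has length $2q+1$ with no branching. So the computation is fully described by the sequence of index choices, which is the tree $T$, plus the final output rule. I would take the functions attached to the tree nodes as the claimed collection.

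\textbf{Step 1.} Fix the deterministic oracle obtained by fixing the random seed $r$, as the observation concerns a single tree $T$. I would argue by induction on depth that a node $z$ at depth $d_z$ is determined by two things: the path of edge indices from the root, and the tuple of $d_z$ neighborhoods $B(\ell_1),\dots,B(\ell_{d_z})$. Each such neighborhood is an element of $\NN^{2q+1}$, so the information available at $z$ is exactly an element of $\NN^{d_z(2q+1)}$. The next label-query index is then a function $f_z\colon \NN^{d_z(2q+1)}\to[n]$. At the final depth-$q$ layer, the output bit is likewise a function of the same data; I would encode it as one more function of this form, or simply treat $\{0,1\}\subseteq[n]$.

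\textbf{Step 2.} Account for the count of functions. The main subtlety here is that the tree itself has fanout $n^{O(q^2)}$, so it has far more than $O(q^2)$ nodes. To bring the count down, I would merge all nodes at the same depth into one function. The path of indices taken so far is itself recoverable from the inputs, because each seed $\ell_j$ is the center entry of $B(\ell_j)$, and the earlier indices were computed from earlier prefixes of the input. Define $F_d(x_1,\dots,x_d)$ by recomputing the path through the earlier $F_{d'}$ and then applying the $f_z$ of the node reached.

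This gives at most $q$ depth-indexed functions, plus at most $O(q^2)$ of bookkeeping if one also separates per-neighbor-query steps, each at most $2q$ per seed. That yields the $O(q^2)$ bound, which in any case dominates $q$. The relevant collection is finite, which is what the later Ramsey argument needs.

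\textbf{Main obstacle.} The main obstacle is Step 2: making precise that the node identity is a function of the collected labels alone, so that per-depth merging is well-defined. I would handle it by the recursive definition above. If the labeling lets two different index paths produce the same label tuple, the recursion still picks the unique path consistent with $F_1,\dots,F_{d-1}$, so $F_d$ is well-defined.
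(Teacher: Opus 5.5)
Your argument is correct, and your Step~1 is exactly what the paper does. The observation is read off the decision-tree description of the canonical oracle: one function $f_z\colon\NN^{d_z(2q+1)}\to[n]$ is attached to each node $z$, and there are finitely many nodes because the depth is at most $q$ and edges carry indices in $[n]$.

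Your Step~2 is a genuinely different consolidation. The paper never justifies a \emph{count} of $O(q^2)$ functions; one function per node gives a collection polynomial in $n$, as the paper itself remarks. The way Claim~\ref{clm:indexv2} invokes the observation (``$\max k_i\le O(q^2)$'') suggests the $O(q^2)$ is really the arity $d_z(2q+1)\le q(2q+1)$, and finiteness is the only other property used.

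You take the count literally and merge all nodes of a given depth into one $F_d$, recomputing the index path from the prefixes via $F_1,\dots,F_{d-1}$. This is well defined, simulates $T$ exactly, and gives $q+1$ functions including the output map. Two of your remarks play no role. The ``bookkeeping for neighbor-query steps'' is unneeded, since the canonical oracle's neighbor queries are non-adaptive. The remark about seeds being the centers of $B(\ell_j)$ is also not used.

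The two routes buy different things. The paper's node-indexed collection feeds directly into the fanout count of Claim~\ref{clm:indexv2}: each comparison-based $f_z$ has boundedly many outputs, so each tree has $q^{O(q^3)}$ edges. Your depth-indexed collection has size independent of $n$, so the simultaneous Ramsey step of Theorem~\ref{thm:finite-ramsey} only has to handle $2^r(q+1)$ functions.

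If you use your collection downstream, you must make the merged $F_d$ (rather than the individual $f_z$) comparison-based on the Ramsey set. This still works for both later claims:
\begin{itemize}
\item It bounds the indices used per tree by $q$ times the number of order patterns of $q(2q+1)$ arguments.
\item It still yields Claim~\ref{clm:dist-chunk}, because for two inputs in the same chunk every prefix induces the same ordering, and hence the same recomputed seeds.
\end{itemize}
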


\emph{Randomized partition oracles:} A randomized partition oracle $\bT$ using $r$ random bits is a collection
of $2^r$ decision tree representations $\bT = \{T_1, T_2, \ldots, T_{2^r}\}$. One of these is chosen
uar before the algorithm starts.

\Thm{main:v3} from the introduction is essentially a restatement of the following lower
bound theorem. Note that we think of $q$ as constant, with respect to the
number of vertices $n$.  

\begin{theorem} \label{thm:lb:v2} Consider a randomized $q$-query partition oracle $\bT$ using $r$ random bits in the \ourmodel with label size $N$.
For any $r = O(1)$, the following happens. There is an input labeling of $G$
such that, with probability $1$ (over the choice in $\bT$) at least $0.99n$ vertices are cut.
\end{theorem}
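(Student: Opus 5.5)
We use a Ramsey-theoretic canonicalization in the spirit of \cite{Fischer04, ChakrabartyS13}. By \Obs{trees-are-funcs}, each of the $2^r$ decision trees $T_j$ is a finite collection of $O(q^2)$ functions $f_z \colon \NN^{d_z(2q+1)} \to [n]$. The cut output at a leaf is also a function of the at most $q(2q+1)$ labels seen. With $K = q(2q+1)$, each tree's behavior is therefore determined by finitely many functions on $K$-tuples of labels.

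\textbf{Step 1: Ramsey.} Color each increasing $K$-subset $\{a_1 < \dots < a_K\} \subseteq [N]$ by the following data. For every tree $T_j$, every node $z$, and every assignment of these values to the argument positions of $f_z$ (and of the final cut function), record the output value together with the order pattern of the arguments. The number of colors is bounded by a function of $q,n,r$ only, since outputs lie in $[n]$ or $\{0,1\}$ and there are $2^r$ trees. Since $r=O(1)$ and $q,n$ are fixed relative to $N$ (recall $n = g^{-1}(N)$ with $g$ growing fast enough), the hypergraph Ramsey theorem gives a set $S \subseteq [N]$ of size at least $n$ on which every function is \emph{order-invariant}. That is, the output depends only on the relative order (and equality pattern) of its arguments. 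For this we need $N$ astronomically large in terms of $n$, and this is exactly what the choice of $g$ provides.

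\textbf{Step 2: Labeling and local indistinguishability.} Label the $n$-cycle with elements of $S$ in increasing order around the cycle, $s_1 < s_2 < \dots < s_n$. Consider a vertex $v$ at cyclic distance more than $q^2$ (or $O(q^2)$) from the "wrap" point where $s_n$ meets $s_1$. The ball $B(v)$ is then an increasing run. The label queries return fixed vertices $L^{-1}(\cdot)$ determined by indices output by order-invariant functions, so the indices queried are the same for all such $v$ except in a bounded number of cases. I will argue as follows. At each tree node, the sequence of indices requested depends only on the order pattern of the combined seen labels. For a fixed tree $T_j$, the order pattern of the view from $v$ is the same for all $v$ in a long stretch, except for $v$ whose balls come within distance $q$ of one of the $O(q)$ fixed queried seeds or of the wrap point. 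Because the seeds are requested by index and the index depends only on the pattern, the seeds are the same across these $v$. Hence for each fixed $T_j$, all but $O(q^3)$ vertices receive an identical cut bit $b_j$.

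\textbf{Step 3: Conclude.} If $b_j = 0$, then fewer than $O(q^3)$ vertices are cut. This produces a component of size greater than $q$ when $n \gg q^4$, contradicting the size guarantee in \Def{cut}. So $b_j = 1$ for every $j$, and at least $n - O(q^3) \ge 0.99n$ vertices are cut under every choice of random string.

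The \textbf{main obstacle} is Step 2, making precise that the view pattern is constant across $v$. The difficulty is adaptivity: later seeds depend on the order relations between $v$'s labels and earlier seeds' labels. A seed's position relative to $B(v)$ (before or after it on the cycle) flips only when $v$ passes that seed. So I would handle this by induction on tree depth. The induction shows that the seed sequence changes only at $O(q)$ breakpoints per depth, which gives $O(q^2)$ breakpoints overall, each affecting $O(q)$ vertices.
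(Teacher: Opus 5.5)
\textbf{Step~2 claims more than order-invariance gives you, and Step~3 depends on that claim.}

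Order-invariance of the node functions only shows that two good vertices follow the same root-to-leaf path of $T_j$ when they lie on the same side of every seed queried along that path. Already at depth~$2$, vertices before and after the first seed $w_1$ present different order patterns. They can therefore request different second seeds and end at leaves with different cut bits. Nothing in order-invariance forbids bit $1$ on one side of $w_1$ and bit $0$ on the other.

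So the sentence ``the seeds are the same across these $v$'' is false. Your obstacle paragraph concedes that the seed sequence changes at breakpoints, which contradicts your conclusion that all but $O(q^3)$ vertices share a single bit $b_j$. As a result, the dichotomy $b_j\in\{0,1\}$ used in Step~3 is not available.

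Your breakpoint count is also too small. A path prefix at depth $k$ confines $v$ to one gap between previously queried seeds, and it may query a fresh seed inside that gap. So the number of runs can double at every level (think of binary search), giving up to about $2^q$ breakpoints rather than $O(q^2)$. The right bound is the size of the order-restricted tree, such as the $q^{5q^3}$ used in Claim~\ref{clm:indexv2}. This is still independent of $n$, so it is harmless once the argument is repaired.

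\textbf{The repair is the paper's chunk argument.} Conclude only that $T_j$'s cut bit is constant on each maximal run of good vertices between consecutive breakpoints. Then apply the component-size guarantee to each run separately. If a run of more than $q$ consecutive vertices all had bit $0$, no edge inside it would be cut, so it would lie in a component of size larger than $q$. Hence every such run is entirely cut. The uncut vertices are confined to short runs and to the $O(q)$-neighborhoods of the breakpoints and of the wrap point. That is at most $f(q)$ vertices with $f$ independent of $n$, which gives at least $0.99n$ cut vertices for large $n$.

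With this fix, your tree-by-tree organization is slightly cleaner than the paper's. The paper takes the union of seeds over all $2^r$ trees and needs $r<\lg n-q^6$ to keep that union below $n/q^5$. Per tree, the breakpoint set is bounded in terms of $q$ alone, so $r$ enters only through the requirement that the Ramsey number fit inside $N$.
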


\subsection{Converting to Comparison-based Algorithms} \label{sec:ramsey}

Towards proving a lower bound on randomness complexity of a partition oracle with small query complexity, one might begin with the following reflex: For $i < Q$, the $i$-th query issued by a low-query (canonical) partition oracle is some function of the preceding label queries or the neighbor queries. Intuitively, one might suspect that this function depends only on the relative order among the preceding queries. We make this instinct precise by using arguments from Ramsey Theory.

We use $\cS_k$ to denote the symmetric group on $k$ elements, which is the set of all permutations of length $k$. We start with a definition.

\begin{definition} \label{def:comp} Given a list of $k$ natural numbers $(x_1, x_2, \ldots, x_k)$,
$\pi(x_1, x_2, \ldots, x_k)$ is the permutation induced by this list.

A function $f: \NN^k \to [n]$ is called \emph{comparison-based} if there exists $g:\cS_k \to [n]$
such that $f(x_1, x_2, \ldots, x_k) = g(\pi(x_1, x_2, \ldots, x_k))$.
\end{definition}

For a comparison-based function, only the ordering of the arguments matter, not their actual values. 
For convenience, given a function $f: \NN^k \to [n]$ and a subset $S \subseteq \NN$, let $f|_S$ be the function
obtained by restricting the inputs to $S^k$.

To introduce the Ramsey phenomena we are interested in, let us consider the following definition.
\begin{definition}
Let $s,t,w \in \mathbb{N}$. We define $r_t(s,w)$ to be the smallest integer $N$
such that every $w$-coloring of the $t$-tuples of $[N]$ contains a subset
$C \subseteq [N]$ of size $s$ for which all $t$-tuples from $C$ receive the same
color.
\end{definition}

\begin{fact}[\cite{ErdosR52} and Cor 10.1.5 in \cite{W24-Ramsey}]\label{fact:ramsey}

We have $r_t(s,w) = 2^{2^{2^{\cdot^{\cdot^{\cdot^{\cdot^{\cdot^{2^{(C_t w \log w) s}}}}}}}}} 
$
\end{fact}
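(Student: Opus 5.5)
We read the displayed formula as an upper bound: a tower of $2$'s whose height depends only on $t$ and whose top exponent is $(C_t w\log w)s$. This is the only direction used in what follows. The plan is the classical Erd\H{o}s--Rado stepping-up argument, by induction on $t$. For $t=1$, pigeonhole gives $r_1(s,w)\le w(s-1)+1\le 2^{(w\log w)s}$ once $C_1$ is large enough. For $t=2$ we run the greedy neighbourhood argument, which is the $t=2$ instance of the general step below, and obtain $r_2(s,w)\le w^{ws}=2^{(w\log w)s}$.

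For the inductive step, assume a bound on $r_{t-1}(s,w)$ and fix a $w$-colouring $\chi$ of the $t$-subsets of $[N]$. We build a sequence $x_1<x_2<\dots<x_m$ together with nested candidate sets $S_1\supseteq S_2\supseteq\cdots$. The invariant is that for every $(t-1)$-subset $A\subseteq\{x_1,\dots,x_i\}$, the colour $\chi(A\cup\{y\})$ is the same for all $y\in S_i$.

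To go from step $i-1$ to step $i$, let $x_i$ be the minimum of $S_{i-1}$. Partition $S_{i-1}\setminus\{x_i\}$ according to the vector of colours $\bigl(\chi(B\cup\{x_i,y\})\bigr)_B$, where $B$ ranges over the $(t-2)$-subsets of $\{x_1,\dots,x_{i-1}\}$. There are at most $w^{\binom{i-1}{t-2}}$ classes, and we keep the largest as $S_i$. Hence $|S_i|\ge (|S_{i-1}|-1)/w^{\binom{i-1}{t-2}}$. The process continues for $m$ steps provided
\[
N \ \ge\ w^{\sum_{i\le m}\binom{i}{t-2}}+m,
\]
and it suffices to take $N\ge w^{m^{t-1}}$ up to lower-order terms.

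On $\{x_1,\dots,x_m\}$ the colour of a $t$-set depends only on its $t-1$ smallest elements. This defines an induced $w$-colouring $\chi'$ of the $(t-1)$-subsets. Choose $m=r_{t-1}(s,w)+1$, so that $\chi'$ has a monochromatic $s$-set $C$. Every $t$-subset of $C$ then takes the colour $\chi'$ assigns to its first $t-1$ elements, so $C$ is monochromatic for $\chi$. This gives the recursion
\[
r_t(s,w)\ \le\ w^{(r_{t-1}(s,w)+1)^{t-1}}\ =\ 2^{(t-1)\log w\cdot \log(r_{t-1}+1)\cdot\,\cdots}\,,
\]
that is, one extra exponential per unit increase in $t$. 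Here $\log r_t\lesssim (t-1)\log w\cdot\log r_{t-1}$.

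The main obstacle is the bookkeeping: we must show that the multiplicative factors $(t-1)\log w$ picked up at each level can be absorbed into the top exponent, so that the tower keeps a fixed height $h(t)$ and its top is $(C_t w\log w)s$. The first thing to try is to pass the polynomial factors upward through the tower with the inequality $2^{a}\cdot b\le 2^{a+\log b}$, combined with $a\ge \log b$ at every level above the top. Near the top, the factors $w$ and $\log w$ would then accumulate only into $C_t w\log w$, with $C_t$ depending on $t$ alone. Since only an upper bound of this shape (height a function of $t$) is used later, we will not try to optimise the tower height.
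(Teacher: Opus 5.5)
Your route is the classical Erd\H{o}s--Rado stepping-up argument. That is exactly what the paper relies on: it gives no proof of its own, only the citation to \cite{ErdosR52} and \cite{W24-Ramsey}. Your combinatorial core is correct: the invariant, the count of $w^{\binom{i-1}{t-2}}$ classes per step, the induced colouring $\chi'$, and the recursion $r_t(s,w)\le w^{(r_{t-1}(s,w)+1)^{t-1}}$ up to lower-order terms. Reading the displayed ``$=$'' as an upper bound with tower height $t-1$ is also the right interpretation. This needs $w\ge 2$, since $w\log w=0$ at $w=1$.

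The step that is wrong is the logarithmic form of the recursion, and that is precisely what your ``main obstacle'' paragraph is built on. The identity $w^{(r_{t-1}+1)^{t-1}}=2^{(t-1)\log w\cdot\log(r_{t-1}+1)\cdots}$ is false, and so is the estimate $\log r_t\lesssim (t-1)\log w\cdot\log r_{t-1}$. The latter would make $r_t$ only polynomial in $r_{t-1}$, contradicting your own ``one extra exponential''. What the recursion actually gives is
\[
\log\log r_t \;\le\; (t-1)\log\bigl(r_{t-1}+1\bigr)+\log\log w .
\]
So the factor $t-1$ acts one level higher than you wrote, and $\log w$ contributes only an additive $\log\log w$.

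With this corrected form the absorption is routine. Suppose $r_{t-1}$ is bounded by a tower of height $t-2$ with top $x_{t-1}$.
\begin{itemize}
\item For $t=3$ you get a tower of height $2$ with top $x_3=2x_2+2+\log\log w\le 5ws\log w$.
\item For $t\ge 4$, use $c\,2^{z}+a\le 2^{z+\log c+a}$ (valid for $a\ge 1$, $z+\log c\ge 0$) to push $(t-1)$ and $\log\log w$ up through the tower; each level passes the additive term up unchanged. This gives $x_t=x_{t-1}+(t-1)+\log(t-1)+\log\log w$.
\end{itemize}
Since $\log\log w\le w\log w$, the top stays $(C_t w\log w)s$ with $C_t$ depending only on $t$. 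That is the bookkeeping your last paragraph needs, in place of absorbing a multiplicative $(t-1)\log w$ on $\log r_{t-1}$.
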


\begin{claim}[Simultaneous Ramsey theorem]\label{clm:SRT}
Let $t,s,w,t',s',w' \in \mathbb{N}$, and define
$c := \max\{s,s'\}.$ 
Then there exists
$$N \leq r_t\!\big(r_{t'}(c,w'),\, w\big) \leq r_a\!\big(r_a(c,b),\, b\big) \text{ where } a = \max(t,t'), b = \max(w,w')$$ 
such that for every pair of colorings
$$
\chi : \binom{[N]}{t} \to [w],
\qquad
\chi' : \binom{[N]}{t'} \to [w'],
$$
there exists a subset $X \subseteq [N]$ with $|X| = c$ satisfying
$$
\chi \text{ is constant on } \binom{X}{t}
\quad\text{and}\quad
\chi' \text{ is constant on } \binom{X}{t'}.
$$
\end{claim}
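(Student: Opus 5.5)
The plan is to apply the Ramsey property twice in succession, first for $\chi$ on a large ground set and then for $\chi'$ inside the monochromatic set it produces. Set $M := r_{t'}(c,w')$ and $N := r_t(M,w)$.

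\emph{First application.} Given $\chi:\binom{[N]}{t}\to[w]$, the definition of $r_t(M,w)$ yields a set $Y\subseteq[N]$ with $|Y| = M$ on which $\chi$ is constant over $\binom{Y}{t}$. (If $t > M$ this condition is vacuous, and any $M$-subset works.)

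\emph{Second application.} Restrict $\chi'$ to $\binom{Y}{t'}$ and transport it to $\binom{[M]}{t'}$ via the order-preserving bijection $[M]\to Y$. The definition of $r_{t'}(c,w')$ then gives a set $X\subseteq Y$ with $|X| = c$ on which $\chi'$ is constant over $\binom{X}{t'}$.

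Since $X\subseteq Y$, every $t$-subset of $X$ is a $t$-subset of $Y$, so $\chi$ is still constant on $\binom{X}{t}$. This proves the existence part with $N = r_t(r_{t'}(c,w'),w)$. Note that we only used $c\ge s,s'$ through $|X| = c$. If sizes exactly $s$ are needed, any subset inherits monochromaticity, so we can pass to a subset.

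\emph{The upper bound.} It remains to show $r_t(r_{t'}(c,w'),w)\le r_a(r_a(c,b),b)$. This needs monotonicity of $r_t(s,w)$ in each argument.
\begin{itemize}
\item In $s$ and $w$, monotonicity is immediate: a $w$-coloring is also a $w''$-coloring for any $w''\ge w$, and a monochromatic set of size $s''\ge s$ contains one of size $s$.
\item In $t$, we need $r_t(s,w)\le r_{t''}(s,w)$ for $t\le t''$. Given a $w$-coloring $\chi$ of $t$-tuples of $[R]$, where $R = r_{t''}(s,w)$, define a coloring of $t''$-tuples by coloring each with $\chi$ of its $t$ smallest elements. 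A monochromatic $s$-set $Z$ for this new coloring is monochromatic for $\chi$ whenever $s\ge t''$: each $t$-subset of $Z$ extends to a $t''$-subset of $Z$ by adding elements larger than its maximum, provided such elements exist. That proviso is handled by choosing the $s$ elements appropriately, or by noting that in the regime $s < t''$ the bound is trivial or the claim is vacuous.
\end{itemize}

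I expect this $t$-monotonicity to be the only mildly delicate point. The extension trick fails when a $t$-subset contains the top elements of $Z$. I would fix this by applying the $t''$-Ramsey property to get a monochromatic set of size $s + t''$, then discarding its top $t''-t$ elements. Alternatively, I would read the final inequality through the tower formula of \Fact{ramsey}, which is visibly monotone in $t$, $w$ and $s$ up to the constants $C_t$, which we may take nondecreasing in $t$.

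Combining these gives
$$r_t(r_{t'}(c,w'),w)\le r_a(r_{t'}(c,w'),b)\le r_a(r_a(c,b),b),$$
completing the proof.
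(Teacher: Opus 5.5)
Your nested argument is correct: first a $\chi$-monochromatic $Y$ of size $r_{t'}(c,w')$, then a $\chi'$-monochromatic $X\subseteq Y$ of size $c$, with $X\subseteq Y$ preserving $\chi$-monochromaticity. It gives the existence part with $N=r_t(r_{t'}(c,w'),w)$. This is the content behind the paper's one-line appeal to \Fact{ramsey}, and it is all that \Thm{finite-ramsey} actually uses. Your monotonicity of $r_t(s,w)$ in $s$ and $w$ is also fine.

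The gap is your claimed monotonicity in $t$, $r_t(s,w)\le r_{t''}(s,w)$ for $t\le t''$, which is false for Ramsey numbers as defined. For instance $r_1(2,2)=3$ by pigeonhole, while $r_2(2,2)=2$. More generally $r_{t''}(s,w)=s$ whenever $s\le t''$, whereas $r_t(s,w)$ can be far larger. So in the regime $s<t''$ the bound is not "trivial or vacuous"; it fails.

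Your repair does not rescue it. Taking a $t''$-monochromatic set of size $s+t''-t$ and discarding its top $t''-t$ elements proves only the shifted bound $r_t(s,w)\le r_{t''}(s+t''-t,w)$. That bound does not chain: with $t'=a$ and $w'=b$ you would need $r_a(c,b)+(a-t)\le r_a(c,b)$.

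In fact no argument can give the stated chain for all parameters. Take $t=1$, $t'=2$, $c=2$, $w=w'=2$. Then $r_t(r_{t'}(c,w'),w)=r_1(2,2)=3$, while $r_a(r_a(c,b),b)=r_2(2,2)=2$. Moreover no $N\le 2$ has the simultaneous property, since colouring the two singletons of $[2]$ differently defeats it.

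The upper bound can only be obtained through your fallback, which is what the paper's proof implicitly does. Read \Fact{ramsey} as an upper bound $r_t(s,w)\le T_t(s,w)$ by the tower expression, and interpret the claimed bound with $T$ in place of $r$. Then
$r_t(r_{t'}(c,w'),w)\le r_t(T_{t'}(c,w'),w)\le T_t(T_{t'}(c,w'),w)\le T_a(T_a(c,b),b)$.
The first step uses only the (true) monotonicity of $r$ in $s$. The last step uses that the tower is monotone in $s$, in $w$, and, with $C_t$ nondecreasing, in $t$.

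You should commit to that reading and drop the attempted $t$-monotonicity of the true Ramsey function. Alternatively, note that only finiteness of $N$ matters downstream, so the second inequality can be omitted altogether.
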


\begin{proof} 
    Follows as an immediate corollary of \Fact{ramsey}.
\end{proof}

We begin by restating a classic Ramsey Theory result from \cite{Fischer04, ChakrabartyS13}.

\begin{theorem} \label{thm:ramsey2} Consider a finite collection of functions $\{f_i\}$ where
$f_i: \NN^{k_i} \to [w_i]$ (for $k_i \in \NN$) (where all $w_i$'s equal $n$). There exists an infinite subset $\MM \subseteq \NN$ such that 
$\forall i$, $f_i|_\MM$ is comparison based.
\end{theorem}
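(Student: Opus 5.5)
The plan is to use the infinite Ramsey theorem, applied one function at a time, and to pass to nested infinite subsets. Since the collection $\{f_i\}$ is finite, it suffices to prove the following single-function statement. For any $f:\NN^k\to[n]$ and any infinite $\MM_0\subseteq\NN$, there is an infinite $\MM\subseteq \MM_0$ with $f|_\MM$ comparison-based. Applying this successively to $f_1, f_2, \ldots$ gives a chain $\NN\supseteq \MM_1\supseteq \MM_2\supseteq\cdots$. The last set works for every $f_i$, because restricting a comparison-based function to a smaller domain keeps it comparison-based (the same $g$ still works).

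For a single $f$, the first step handles inputs with repeated coordinates. Every tuple $(x_1,\dots,x_k)\in \MM_0^k$ has an \emph{order type}, which we identify with the permutation $\pi(x_1,\ldots,x_k)$, understood as a weak ordering when ties occur. Fix one order type $\sigma$ and let $j$ be the number of distinct values it uses. Any $j$-element set $A=\{a_1<\dots<a_j\}\subseteq\MM_0$ determines a unique tuple $x^{\sigma}(A)$ of order type $\sigma$, obtained by placing $a_m$ wherever $\sigma$ puts the $m$-th smallest value. We colour $A$ by $\chi_\sigma(A)=f(x^{\sigma}(A))\in[n]$.

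This is a finite colouring, with $n$ colours, of the $j$-subsets. The infinite Ramsey theorem (the infinite counterpart of \Fact{ramsey}; the finite version with $s\to\infty$ together with compactness would also do) then gives an infinite $\MM'\subseteq\MM_0$ on which $\chi_\sigma$ is constant, say equal to $g(\sigma)$. There are only finitely many order types on $k$ coordinates (at most the ordered Bell number). Iterating over all of them produces an infinite $\MM\subseteq\MM_0$ on which every $\chi_\sigma$ is constant. For any tuple in $\MM^k$ we then have $f(x)=g(\pi(x))$, so $f|_\MM$ is comparison-based.

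The main subtle point is the handling of ties. Definition~\ref{def:comp} uses $\pi$ as a permutation in $\cS_k$, so it must be settled how tuples with equal entries are treated. I would either interpret $\pi$ as the weak order type, or note that the later application only feeds in distinct labels (labels on a cycle are distinct), in which case the distinct-value order types suffice. Beyond that point the argument is routine: finitely many applications of infinite Ramsey, each preserving infinitude.
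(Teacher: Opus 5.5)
Your proof is correct. The paper gives no proof of Theorem~\ref{thm:ramsey2}; it only quotes the result from Fischer and Chakrabarty--Seshadhri. Your argument is the standard proof behind that citation: one colouring of $j$-subsets per order type, then finitely many applications of the infinite Ramsey theorem, each shrinking an infinite set.

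What your write-up adds is an explicit convention for ties, and that convention is necessary. If $\pi(x)\in\cS_k$ is read literally, the statement is false. Take $f(x,y)=1$ if $x=y$ and $f(x,y)=2$ otherwise. Whichever permutation is assigned to $(a,a)$ is also the permutation of some pair $(a,b)$ with $b\neq a$ in $\MM$, as long as $a$ is not the minimum of $\MM$. So the weak-order-type reading is the right one.

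Two of your side remarks should be corrected.

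First, your fallback that the later application only sees distinct labels does not hold. The arguments of $f_z$ are the concatenated balls $B(\cdot)$ around the seeds. These overlap, or coincide, whenever two seeds are within distance $2q$. So repeated entries do occur, and the weak-order version is the one you need there.

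Second, the parenthetical about obtaining the infinite Ramsey theorem from the finite one via compactness runs the wrong way. Compactness derives the finite theorem from the infinite one. A limit of finite homogeneous sets of growing size is homogeneous, but it can be finite, even empty, if the sets escape to infinity. Just cite the infinite Ramsey theorem directly.
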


Suppose our collection $\{f_i\}$ of functions holds $\alpha$ functions in total. We are now ready to present a finitary version of this Ramsey Theorem. It asserts that one may assume that for every $i$, domain of $f_i$ is $k_i$-tuples all of whose elements come from $[N]$ for some finite $N$ and additionally, the set $\MM \subseteq [N]$ and thus $\MM$ is finite as well. Write $\max k_i \leq t$ and note $\forall i, w_i = w = n$. Also, note that for a fixed $i$, the function $f_i$ colors $k_i$-tuples of a ground set containing $n$ elements and this means we have $\forall i, s_i = s \leq n$.

\begin{theorem} \label{thm:finite-ramsey}
Consider a collection $\{f_i\}$ of functions where $f_i \colon \NN^{k_i} \to [n]$. Then there exists a finite $N \in \NN$ and a subset $C \subseteq [N]$ such that

\begin{itemize}
    \item $|C| \geq n$.
    \item For every $i$, the restriction of $f_i$ to $k_i$-tuples chosen from $C$ is constant.
\end{itemize}

\end{theorem}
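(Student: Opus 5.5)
We read the statement in the sense the paper evidently intends: on increasing $k_i$-tuples from $C$, i.e.\ on $\binom{C}{k_i}$, each $f_i$ is constant. Taking this literally on all of $C^{k_i}$ (with repeated or permuted entries) is false in general. Combined with a further refinement over orderings of the tuples, this yields the comparison-based conclusion of \Thm{ramsey2}. The plan is to iterate the finite hypergraph Ramsey theorem (\Fact{ramsey}), once per function, shrinking the ground set each time. The collection is finite, with say $\alpha = O(q^2)$ functions by \Obs{trees-are-funcs}. Each $f_i$ defines a colouring $\chi_i \colon \binom{\NN}{k_i} \to [n]$ by $\chi_i(\{x_1<\dots<x_{k_i}\}) := f_i(x_1,\dots,x_{k_i})$. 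Set $t := \max_i k_i$ and $w := n$.

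We define target sizes backwards. Let $s_\alpha := \max(n,t)$, and for $j = \alpha-1,\dots,0$ let $s_j := r_{k_{j+1}}(s_{j+1}, n)$. Then take $N := s_0$, which is finite by \Fact{ramsey}. By monotonicity of $r$ in its parameters, $N \le r_t(r_t(\cdots r_t(\max(n,t),n)\cdots,n),n)$ with $\alpha$ nested applications. This is the $\alpha$-fold analogue of \Clm{SRT}, which itself is exactly the case $\alpha = 2$.

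The construction is then inductive. Start with $X_0 := [N]$, which has size $s_0$. Given $X_{j}$ with $|X_j| \ge s_j = r_{k_{j+1}}(s_{j+1},n)$, restrict $\chi_{j+1}$ to $\binom{X_j}{k_{j+1}}$, identifying $X_j$ with $[|X_j|]$ by order. The definition of $r$ then yields $X_{j+1}\subseteq X_j$ with $|X_{j+1}| = s_{j+1}$ on which $\chi_{j+1}$ is monochromatic. Since $X_{j+1}\subseteq X_j$, every earlier $\chi_{i}$ with $i\le j$ remains constant on $\binom{X_{j+1}}{k_i}$. Setting $C := X_\alpha$ gives $|C| = s_\alpha \ge n$ and the required constancy of every $f_i$.

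To recover the comparison-based form, we refine further before each step. We colour a $k_i$-set $\{x_1<\dots<x_{k_i}\}$ by the full vector $\big(f_i(x_{\sigma(1)},\dots,x_{\sigma(k_i)})\big)_{\sigma}$, where $\sigma$ ranges over all orderings with allowed repetitions, or more precisely over all maps from $[k_i]$ to an increasing $m$-subset for $m \le k_i$. This uses at most $n^{k_i^{k_i}}$ colours, still finite, so the same iteration goes through with $w$ replaced by this larger constant. On $C$, $f_i(x)$ then depends only on the order pattern of $x$.

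The main obstacle is bookkeeping rather than mathematics. First, one must check that the colourings are well defined on sets, which requires sorting the tuples. Second, repeated entries are handled by treating each $m$-subset and equality pattern as a separate colouring, which adds finitely many more rounds. Third, the nesting must be arranged so that each application of \Fact{ramsey} has enough room: the final set must have size at least $n$, and also at least $t$, so that the $k_i$-subsets exist. The tower bound on $N$ is irrelevant here; all we need is finiteness.
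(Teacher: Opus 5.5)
Your proof is correct and is essentially the paper's argument. The paper also obtains $N$ by nesting the finite Ramsey theorem (\Clm{SRT}) once per function, and your backward definition of the sizes $s_j$ simply makes that nesting explicit. Your reading of ``$k_i$-tuples'' as $k_i$-subsets is exactly what the paper's Ramsey-based proof establishes. The order-pattern refinement you add goes beyond what this statement requires; it is most cleanly done with one colouring of $\binom{\NN}{m}$ per pattern $[k_i]\to[m]$, as you note.
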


\begin{proof}
    Let the size of the collection of functions $\{f_i\}$ be $\alpha$. For all $1 \leq i \leq \alpha$, set $s_i = s = n, w_i = w = n,  \max k_i = t$. Apply \Clm{SRT} repeatedly $\alpha$ times with parameters $s,t,w$. This nested application produces a value $N$ which respects the desired Ramsey phenomena simultaneously with respect to all the $f_i$'s. 
\end{proof}

The foregoing result, \Thm{finite-ramsey} upgrades from an infinite Ramsey-style theorem to a finite Ramsey-style theorem and it allows us to talk about functions where the set from which labels are chosen for the vertices in the \ourmodel can be bounded. Next, we show using a low randomness implementation of $\bT$, the number of label queries that $\bT$ ``\emph{can ever issue}'' can be shown to be no more than $n/q^5$. We emphasize that we are not merely bounding the total number of label queries issued by the canonical oracle during a single execution. Whereas, we are bounding the total number of label queries $\bT$ issues across all of its unrolled execution tree, which is potentially a much bigger quantity.

\begin{claim} \label{clm:indexv2} 
    Take $n \in \NN$ and fix an $n$-cycle, $G$. Consider a randomized $q$-query partition oracle $\bT$ using $r < \lg n - q^6$ random bits. Then there exists a set $C$ of labels such that if vertices of $G$ are labeled according to $C$, then the total number of indices used for label queries by $\bT$ (over all trees in $\bT$) is at most $n/q^5$.
\end{claim}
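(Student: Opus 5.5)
We plan to combine \Obs{trees-are-funcs} with \Thm{finite-ramsey}. Each tree $T_j \in \bT$ is a decision tree of depth at most $q$. Its nodes contain functions $f_z \colon \NN^{d_z(2q+1)} \to [n]$, and the number of distinct functions per tree is bounded in terms of $q$ alone. Over all $2^r$ trees this gives a finite collection $\{f_i\}$. We apply \Thm{finite-ramsey} to this whole collection, taking $N$ large enough. This yields a set $C \subseteq [N]$ with $|C| \geq n$ such that every $f_i$, restricted to tuples from $C$, is constant. More precisely, we use the version where each $f_i$ depends only on the order pattern of its arguments, as in \Thm{ramsey2}. For a cycle labeled increasingly around its vertices, the relevant order patterns of the segments are forced, so each $f_i$ takes a single value. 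We then label $G$ using elements of $C$, placing them in increasing order along the cycle.

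The key consequence is that once labels come from $C$, the index output at each node no longer depends on the seed labels. So along each tree the sequence of indices is fixed, and each tree $T_j$ uses only one root-to-leaf path. That path contains at most $q$ nodes, and hence at most $q$ label-query indices. Summing over the $2^r$ trees, the number of indices ever used is at most $q \cdot 2^r$. With $r < \lg n - q^6$, this is at most $q\, n\, 2^{-q^6} \leq n/q^5$ whenever $q \geq 2$; small $q$ can be handled by the same inequality, since $2^{q^6} \geq q^6$.

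The main obstacle is justifying that the functions really become constant rather than merely comparison-based. The arguments are several segments of $2q+1$ labels each. These segments come from arbitrary positions on the cycle, and they may overlap or appear in different relative orders. Hence the order pattern of the concatenated tuple can vary even under a monotone labeling, for example through wrap-around or through the relative order of different seeds. To handle this, we first make each $f_i$ comparison-based via \Thm{ramsey2}, or its finite version. We then observe that there are at most $|\cS_{d_z(2q+1)}|$ patterns, which is a function of $q$ only. Even if the index at a node depends on the pattern, the fanout per node is bounded by $(2q^2+q)!$ rather than $n$. The total number of indices across all trees is then at most $2^r \cdot ((2q^2+q)!)^{q}$. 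We expect this to be absorbed into $2^{q^6}$ only up to constants, so we may need to adjust the exponent. If that slack is too tight, we will instead choose the labeling so that every seed segment has a fixed order type; the ordering by position handles most of this, leaving only the wrap-around segment, which we can treat separately.
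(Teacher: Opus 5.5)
Your headline argument does not go through, for exactly the reason you raise in your last paragraph: on $C$ the node functions are only comparison-based, not constant. Already at depth $2$ the arguments are $B(x)$ and $B(s_1)$ for a fixed seed $s_1$. With labels increasing along the cycle, the order type of this concatenation depends on whether the input $x$ lies before $s_1$, after it, or within distance $2q$ of it. So the output, and hence the child reached, may differ between inputs. A tree is therefore not a single root-to-leaf path, and the bound $q\cdot 2^r$ is unjustified. Your last-resort idea of fixing the order type of each individual segment cannot repair this, since it is the order \emph{between} segments that varies with the input. The word ``constant'' in Theorem~\ref{thm:finite-ramsey} has to be read as ``constant on each order type'', i.e.\ comparison-based, which is how the paper's proof actually uses it.

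Your first fallback, on the other hand, is precisely the paper's proof. A node at depth $d_z \le q$ has at most $q(2q+1)$ arguments, so on $C$-labelings its fanout is at most $(q(2q+1))!$. Each tree then has at most $2\,((q(2q+1))!)^q$ edges, and every index used for a label query lies on an edge of one of the $2^r$ trees. Overlapping segments or repeated seeds create ties; counting weak orders instead of permutations (at most $k^k$ of them for $k=q(2q+1)$) keeps the fanout at $2^{O(q^2\log q)}$. No adjustment of the exponent is needed, nor would one be possible, since $q^6$ is fixed by the hypothesis. The slack is enormous: $\lg\big(((2q^2+q)!)^q\big) \le q(2q^2+q)\lg(2q^2+q) = O(q^3\log q)$. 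Hence $2q^5\big((2q^2+q)!\big)^q \le 2^{q^6}$ for all $q\ge 3$, and a direct check handles $q=2$. So the number of indices is at most $2^{\lg n - q^6}\cdot 2\big((2q^2+q)!\big)^q \le n/q^5$. Drop the single-path route and present the fallback with this calculation.
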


\begin{proof}
    Recall that a randomized partition oracle $\bT$ using $r$-bits of randomness picks a uniformly random decision tree from a collection of decision trees $\{T_1, T_2, \ldots T_r\}$. Consider the canonical representations for all these trees and note that by \Obs{trees-are-funcs}, we can view $\bT$ as a finite list of functions of the form $f_i \colon \NN^{k_i} \to [n]$. Note that $\max k_i \leq O(q^2)$ and that all of these functions can be thought of as colorings with a palette of size $w_i = n$. By \Thm{finite-ramsey}, for a large enough $N$ depending on all these parameters, there exists a subset $C \subseteq [N]$ with $|C| = n$ such that the restriction of $f_i$ to $k_i$ tuples from $C$ is a constant. Thus, we note that all these functions, when restricted to labels from $C$ are comparison-based (see \Def{comp}). Let us focus on labelings to $G$ that come from this set, $C$. \\ 

    Now, consider the finite family of \emph{all} functions over \emph{all} nodes in \emph{all} trees of $\bT$.
    Take any tree $T \in \bT$ and any node $z \in T$ and the corresponding function $f_z$.  Since $f_z$ has at most $q(2q+1)$ arguments and is comparison-based, it can have at most $(q(2q+1))! \leq q^{5q^2}$ outputs. Hence, each node has fanout at most $q^{5q^2}$ (instead of $n^{O(q^2)}$ for general input labelings). Hence, restricted to inputs from $\MM$, any tree in $\bT$ has size at most $(q^{5q^2})^q = q^{5q^3}$. The total number of edges in all trees in $\bT$ is at most $$ 2^r q^{5q^3} \leq 2^{\lg n - q^6} q^{5q^3} \leq n/q^5 $$ 
    
    Observe that any index used for a label query is on some edge of a tree in $\bT$. Hence, there are at most $n/q^5$ indices used by $\bT$, when the input labels are restricted to $C$. 
\end{proof}

\begin{remark} \label{rem:inverse-ramsey}
In the \ourmodel, as noted in \Clm{indexv2}, for a fixed $n$, the label set size $N$ satisfies $N \geq |C_{\ref{thm:finite-ramsey}}|$ which means $N$ has a wowzer-like growth in $n$. Thus, for a fixed $N$, under the \ourmodel, we may assume $n \ll g^{-1}(N)$ for a sufficiently fast growing function $g$.
\end{remark}

\subsection{Proof of \Thm{lb:v2}} \label{sec:lb-proof}
We have all the tools to complete the proof. Consider the set $C$ from $\Clm{indexv2}$. 
We will label $G$ with the elements in $C$ in order. So the vertex labels
in $G$ will be $m_1 < m_2 < m_3 < \ldots < m_n$, where each $m_i \in C$. 

Since the labeling is fixed, we will simply refer to a vertex by its label.
Consider the set of at most $n/q^5$ indices obtained from \Clm{indexv2}. These correspond
to specific vertices in $G$, which we call the entire seed set. A seed vertex refers
to an element in this set.

\begin{definition} \label{def:chunk} A \emph{chunk} is a maximal contiguous interval
of at least $q^2$ vertices that are at distance at least $q$ from every vertex
in the entire seed set. Moreover, all labels in a chunk are in monotonic order
(as we walk along the interval).
\end{definition}

A simple calculation shows that an overwhelming majority of vertices are in chunks.

\begin{claim} \label{clm:chunk} At least $(1-1/q^2)n$ vertices are in chunks.
\end{claim}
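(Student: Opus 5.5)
The plan is a counting argument on the cycle: the vertices that are not in chunks either lie near the entire seed set or sit in intervals that are too short. By \Clm{indexv2}, the entire seed set has at most $n/q^5$ elements. Each seed vertex $s$ "blocks" the vertices within distance less than $q$ of it, which is at most $2q-1$ vertices. So the total number of vertices at distance less than $q$ from some seed is at most $(2q-1)\,n/q^5 \le 2n/q^4$.

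Delete these blocked vertices from the cycle. What remains is a disjoint union of maximal contiguous intervals of unblocked vertices (if there are no seeds at all, the whole cycle is one interval, and we simply cut it at any point). The gaps between consecutive blocked regions correspond to consecutive pairs of seeds along the cycle, so there are at most $n/q^5$ such intervals. Intervals with at least $q^2$ vertices are chunks. Intervals with fewer than $q^2$ vertices are discarded; together they contain at most $q^2 \cdot n/q^5 = n/q^3$ vertices.

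Next comes the monotonicity requirement in \Def{chunk}. The labeling assigns $m_1<m_2<\dots<m_n$ in order around the cycle, so labels increase along every contiguous interval that avoids the single "wrap-around" edge between $m_n$ and $m_1$. At most one interval contains that edge. If it does, we split it at that edge into two monotone intervals. Pieces shorter than $q^2$ cost at most an extra $2q^2 \le n/q^3$ vertices, assuming $n \ge q^5$, which holds since $n$ is huge relative to $q$.

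Summing up, the vertices outside chunks number at most $2n/q^4 + n/q^3 + n/q^3 \le n/q^2$ once $q$ is at least a small constant. This gives the claim.

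The step that needs the most care is checking that the count of intervals is bounded by the number of seeds. The count must be charged per seed rather than per vertex, which is what keeps the short-interval loss at $q^2$ per seed. The wrap-around edge also has to be handled separately, as above. Everything else is arithmetic with generous slack from the $n/q^5$ bound.
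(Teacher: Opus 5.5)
Your proof is correct and follows essentially the paper's argument: bound the vertices near the at most $n/q^5$ seeds, observe that at most $n/q^5$ intervals remain, and discard the short ones at a cost of at most $q^2$ each. The only difference is cosmetic. The paper handles the wrap-around edge by adding the vertices labeled $m_1$ and $m_n$ to the seed set, whereas you split the interval containing that edge. (Your side condition should read $n \geq 2q^5$ rather than $n \geq q^5$; this is harmless, since $r < \lg n - q^6$ forces $n > 2^{q^6}$.)
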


\begin{proof} We add the lowest label and the largest label (which are neighbors) to the seed set.
We delete all vertices within distance $q$ of any seed vertex in $G$.
By \Clm{indexv2}, there are at most $n/q^5$ seeds, so a total of $(n/q^5)(2q+1) \leq n/q^3$
vertices are deleted. Since we deleted at most $n/q^5$ contiguous segments,
there are at most $n/q^5$ connected components remaining. Each of these has labels in monotonic order.
Let us now delete
any connected component of size at most $q^2$, so an additional (at most) $n/q^3$ vertices
get deleted. All remaining connected components are chunks. There are at least
$n - 2n/q^3 \geq (1-1/q^2)n$ vertices in chunks.
\end{proof}

The next claim argues that the randomized partition oracle $\bT$ cannot distinguish vertices
within a chunk.

\begin{claim} \label{clm:dist-chunk} Consider a chunk $C$. For all $T \in \bT$,
the output of $T$ on all input labels in $C$ is identical.
\end{claim}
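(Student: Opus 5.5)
We fix a tree $T \in \bT$ and two input labels $u, u'$ in the chunk, and show that the execution of $T$ on $u$ and on $u'$ traces the same root-to-leaf path and produces the same output bit. The argument is an induction on the depth of the node reached in $T$. We compare the list of arguments fed to $f_z$ at each node under the two inputs, and show that these lists induce the same permutation. Since each $f_z$ restricted to labels from $C$ is comparison-based (by \Clm{indexv2} and \Thm{finite-ramsey}), the outputs then coincide.

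\emph{Base case (the root).} The first segment is $B(u)$, the $2q+1$ labels at distance at most $q$ from $u$, listed in a fixed walk order. Every vertex of the chunk is at distance at least $q$ from every seed vertex. The lowest and highest labels were also added to the seed set in \Clm{chunk}, so the segment $B(u)$ cannot wrap past the point where the labels jump from $m_n$ to $m_1$. Hence $B(u)$ is a contiguous interval of $G$ with labels monotone in the same direction as the chunk.

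The chunk condition gives distance $\geq q$ rather than $>q$, and the chunk's endpoints might have $B(u)$ extend beyond the chunk. I expect this boundary issue to be the main technical nuisance. I would handle it by noting that $B(u)$ still avoids any seed vertex (or touches one only at its endpoint, which is harmless), so $B(u)$ lies in a monotone stretch of the cycle. Thus $\pi(B(u)) = \pi(B(u'))$, both being the identity (or its reverse, uniformly for the chunk). The root function therefore outputs the same index $i_1$ for both inputs.

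\emph{Inductive step.} Suppose the same indices $i_1,\dots,i_{j}$ have been issued. Then the seeds $s_1,\dots,s_j$ are the same vertices for both inputs, since label queries depend only on the index and the labeling is fixed. Their segments $B(s_k)$ are therefore literally identical. The input to $f_z$ at depth $j+1$ is $(B(u),B(s_1),\dots,B(s_j))$ versus $(B(u'),B(s_1),\dots,B(s_j))$.

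The permutation induced by this concatenation depends on two things. The first is the internal order of $B(u)$, which we have shown to be the same. The second is how each label of $B(u)$ compares with each label in $\bigcup_k B(s_k)$. The key point is that every $s_k$ is in the entire seed set, so $B(s_k)$ lies within distance $q$ of a seed vertex. The chunk is a contiguous monotone run containing no such vertex, together with its radius-$q$ neighbourhood $B(u)$ as argued above.

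Because the labels are $m_1<\dots<m_n$ placed in order around the cycle, the labels of such a monotone run form an interval of consecutive ranks. No label of any $B(s_k)$ lies strictly inside that interval. So every label $x$ in $\bigcup_k B(s_k)$ is either below all labels of $B(u)\cup B(u')$ or above all of them, and it compares identically to the $p$th entry of $B(u)$ and of $B(u')$ for every $p$.

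Consequently the two argument lists induce the same permutation, and comparison-basedness gives the same next index. After $q$ steps, the leaf reached and hence the output bit (also computed as a comparison-based function of the same data) are identical. This holds for every $T\in\bT$.
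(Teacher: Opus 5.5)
Your strategy matches the paper's. At each node of $T$ you show that $(B(u),B(s_1),\dots,B(s_j))$ and $(B(u'),B(s_1),\dots,B(s_j))$ induce the same order pattern, then invoke comparison-basedness. The induction on depth makes explicit that both runs query the same seeds, which the paper leaves implicit, and your base case is fine. The gap is in the inductive step, at exactly the boundary issue you flagged. You resolve it only for the monotonicity of $B(u)$, but then use it for something stronger.

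Your key assertion is that no label of $\bigcup_k B(s_k)$ lies inside the rank interval spanned by $B(u)\cup B(u')$. Equivalently, the radius-$q$ neighbourhood of the chunk must be disjoint from every $B(s)$ with $s$ a seed. Since $B(s)$ reaches $q$ steps from $s$ toward the chunk and $B(u)$ reaches $q$ steps from $u$ toward $s$, this needs $\mathrm{dist}(u,s)>2q$ for every chunk vertex $u$ and every seed $s$. A chunk only guarantees distance $\ge q$ (and the proof of \Clm{chunk} only $>q$).

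Concretely, take $u$ at distance $q+1$ from a seed $s$ that is among $s_1,\dots,s_j$. Then $B(u)$ and $B(s)$ share $q$ labels, while for an interior $u'$ every label of $B(u')$ lies strictly above (or below) all of $B(s)$. These are different comparison patterns, and a comparison-based $f_z$ is free to separate them. Your case where $B(u)$ ``touches a seed only at its endpoint, which is harmless'' is in fact the worst one: there $s\in B(u)$, and $B(u)$ and $B(s)$ overlap in $q+1$ labels.

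The paper's proof has the same hole: it asserts without justification that $B(m)$ is disjoint from $B(s)$ for all seeds $s$. The repair is to define chunks using distance greater than $2q$ from every seed, i.e.\ delete everything within distance $2q$ of a seed in \Clm{chunk}. This costs $4q+1$ instead of $2q+1$ vertices per seed, still $O(n/q^4)\le n/q^3$ for large $q$, so \Clm{chunk} and the final wrap-up are unaffected. With that definition your argument goes through verbatim.
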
 

\begin{proof} Consider two labels/vertices $m, m' \in C$. Let $S$ denote the entire seed set.
The ordering within
$B(m)$ is just monotonic order, and is the same for $B(m')$. Since $m$ is in a chunk,
$B(m)$ is disjoint with $B(s)$ for all $s \in S$. 
Consider the ordering of the elements of $B(m)$ with respect to $\bigcup_{s \in S} B(s)$. Observe
that this ordering is the same whether we choose $B(m)$ or $B(m')$. 

In other words, take any node $z$ of a decision tree in $\bT$. The arguments of corresponding
function $f_z$ involve $B(s)$ (for some seeds in $s \in S$) and $B(x)$ where $x$ is the input.
All of these values induce the same permutation regardless of whether the input $x$ is $m$
or $m'$. Hence, the output of all nodes in all decision trees of $\bT$ is identical,
implying that the overall output is identical for all inputs from $C$.
\end{proof}

We wrap up the proof. Observe that, for any chunk $C$, some vertex must be cut,
otherwise there is a connected component of size at least $q$. By \Clm{dist-chunk},
all vertices in a chunk are cut, by every partition oracle in $\bT$. And by \Clm{chunk}, at least $(1-1/q^2)n$ vertices are cut by every partition oracle, completing the proof of \Thm{main}.

\section{Partition Oracle in the \texorpdfstring{$b$}{b}-wise Independent Framework}
In this section, we prove \Thm{main}. We prove the slightly stronger version for \ourmodel.

\begin{theorem} \label{thm:main:huge-label}
Consider the \ourmodel with label set of size $N$. Let $\cP$ denote the set of $d$-bounded degree graphs in a minor-closed family on $n \leq N$ vertices. Then there exists an $(\eps, \poly(d/\eps), \poly(d/\eps) \cdot \log N)$-partition oracle for $\cP$.
\end{theorem}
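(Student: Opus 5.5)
We will prove \Thm{main:huge-label} by running the KSS partition oracle essentially unchanged, as in \globpart, with its per-vertex randomness replaced by limited-independence families. The parameters are those listed earlier: $\minp,\len,\sizefrac,\delta,\alpha,\phi$ and $b = 4\ell\rho^{-1}$.

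The seed consists of two parts, $\bS_1$ and $\bS_2$. For each phase $h \le \overline{h} = \poly(d/\eps)$, $\bS_1(h)$ generates a $b$-wise independent family $\{H_v^{(h)}\}_{v\in[N]}$ with $H_v^{(h)}\sim\texttt{Ber}(\delta)$, and a $b$-wise independent family $\{T_v^{(h)}\}_{v\in[N]}$ with $T_v^{(h)}\sim\texttt{Unif}([1,\ell])$. Standard polynomial-hash constructions over a field of size $\poly(N)$ need only $O(b\log N)$ bits per family, so $\bS_1$ has $\poly(d/\eps)\log N$ bits. The part $\bS_2$ is used only by \findr, which samples $\sizefrac^{-10}$ vertices through label queries, costing $O(\sizefrac^{-10}\log N)$ bits. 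The total seed length is therefore $\poly(d/\eps)\cdot\log N$.

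Consistency and query complexity are deterministic properties of a fixed seed, so they carry over from KSS directly. Every vertex computes the same thresholds $k_h$ from the shared $\bS_2$. The variables $h_v, t_v$ are fixed functions of the label $v$. Each \cluster call explores at most $\poly(d/\eps)$ vertices, because the truncated walk $\trwalk^t$ only keeps entries of mass at least $\minp$. The usual argument that $F$ is updated in a fixed global order, and can be simulated locally as in \cite{KSS:21}, goes through because it never used independence.

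The substance is the cut bound, which we would establish by re-tracing the KSS analysis lemma by lemma and checking that each probabilistic statement involves only $O(\ell\rho^{-1})\le b$ of the random variables. The key local observation is that $\trwalk^t\vec v$ has support of size at most $\rho^{-1}$ for every $t\le \ell$, so any event concerning a single vertex's cluster in a single phase is determined by the $H^{(h)},T^{(h)}$ values of at most $\ell\rho^{-1}$ vertices. Such events therefore have exactly the same probability as under full independence. KSS's global guarantees are statements of the form "the expected number of cut edges, or of vertices left unclustered after phase $h$, is at most $\eps' dn$." We would rewrite each of these as a sum over vertices of indicator probabilities, using linearity of expectation rather than concentration plus a union bound. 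Each summand involves at most two vertices' neighborhoods, which is still within $b$-wise independence since $b = 4\ell\rho^{-1}$. Markov's inequality then gives the probability-$2/3$ cut bound after adjusting constants.

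The analysis of \findr is the second place where randomness enters. It needs that the empirical fraction of $(h,k)$-viable samples approximates the true fraction. Because $\bS_2$ supplies independent samples and is not a limited-independence family, ordinary Chernoff bounds over $\sizefrac^{-10}$ samples suffice. A union bound over the $\overline{h}\cdot\minp^{-1}$ choices of $(h,k)$ then costs only a $\poly(d/\eps)$ factor.

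The main obstacle we expect is any step in KSS that does use concentration across many vertices. The likely candidates are the argument that a constant fraction of vertices in phase $\ge h$ are "good" seeds, and the argument that clusters from different seeds in the same phase overlap rarely. For each such step we would first try to replace "with high probability, for all $v$" by "in expectation over $v$." Where a pairwise interaction appears, such as $v$'s cluster being stolen by an earlier $u$, we would bound the expected number of interacting pairs. This uses $2\ell\rho^{-1}$-wise independence, which is why $b$ carries the factor $4$. If some step genuinely needs concentration, we would fall back to Chebyshev with $4$-wise products, which $b$-wise independence also supports, accepting a worse polynomial in $\eps$. That possibility is reflected in the large exponents chosen for $\minp$ and $\delta$.
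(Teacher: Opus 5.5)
Your proposal follows essentially the same route as the paper. $\bS_1$ generates $b$-wise independent phase and timestep families indexed by $[N]$, giving a $\poly(d/\eps)\log N$ seed, and $\bS_2$ supplies independent samples for \findr, analyzed by Chernoff plus a union bound over $(h,k)$; consistency and query complexity carry over deterministically, and the cut bound is re-derived from KSS using that events about $u$ involve only the at most $\ell/\rho$ vertices of $\ball(u)$, so $b = 4\ell/\rho$ gives exact independence there (the paper's charging argument in \Clm{newcharging}), with limited-independence tail bounds (\Fact{kwise:chernoff}) handling the remaining concentration steps. The paper's proof of this theorem is exactly this bit count combined with \Thm{findrv2}, \Thm{edgecut:bnd} and \Rem{remainingtasks}.
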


We begin by revisiting the core partitioning framework introduced by \cite{KSS:21}, which this paper reuses in full. We note that this algorithm can be extended easily to the \oursetting.

We note that the novelty of our work does not lie in proposing a new algorithm, but rather in demonstrating that their algorithm can be implemented with dramatically reduced randomness complexity. This section aims to establish notation, recall key subroutines from \cite{KSS:21}, and highlight where randomness enters. We also describe how we reorganize the randomness into structured sources that are amenable to derandomization.

\subsection*{Global Diffusion Framework and Key Parameters}
The algorithm in \cite{KSS:21} is built on a global diffusion process over a bounded-degree graph $G = (V,E)$ with $n$ vertices and maximum degree $d$. All relevant parameters are set as functions of a proximity parameter $\eps > 0$:

\parameterinfo

Random walks are governed by the symmetric lazy random walk matrix $M$, where
\[
M_{u,v} = \begin{cases}
\frac{1}{2d} & \text{if } (u,v) \in E, \\
1 - \frac{d(u)}{2d} & \text{if } u = v, \\
0 & \text{otherwise}.
\end{cases}
\]

We recall the truncated diffusion operator, a central primitive:

\begin{definition}[Truncated Diffusion \cite{KSS:21}]\label{def:trundiff}
Let $\vec{x} \in (\RR^+)^n$. Define $\trwalk(\vec{x})$ to be $M \vec{x}$ with all coordinates $\le \minp$ zeroed out. The $t$-step truncated diffusion is defined recursively as $\trwalk^t(\vec{x}) := \trwalk(\trwalk^{t-1}(\vec{x}))$, with $\trwalk^0(\vec{x}) := \vec{x}$.
\end{definition}

We write $\level{v}{t}{k}$ to denote the \emph{level-set} of the $k$ largest coordinates (by value) in $\trwalk^t(\vec{v})$, breaking ties lex in vertex IDs. Here, $\vec{v}$ denotes the indicator vector with support $v$.

For local simulation and charging arguments, we will also need:

\begin{definition}[Inverse Ball \cite{KSS:21}]
\label{def:ball}
For any $v \in V$, define the inverse ball:
\[
\ball(v) := \left\{ w \in V \mid \exists t \in [0,\ell] \text{ such that } v \in \supp(\trwalk^t(\vec{w})) \right\}.
\]
\end{definition}

\begin{claim}[\cite{KSS:21}]
\label{clm:ballsize}
For every $v \in V$, $|\ball(v)| \leq \ell / \minp$.
\end{claim}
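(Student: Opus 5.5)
Fix $v$ and count the sources $w$ that can put mass on $v$ at each step $t\in[0,\ell]$ separately, then take a union bound over the $\ell+1$ values of $t$. The key tool is that the lazy walk matrix $M$ is symmetric and doubly stochastic: every row and every column of $M$ sums to $1$, since $M_{u,u}=1-d(u)/2d$ and each of the $d(u)$ off-diagonal entries is $1/2d$.

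\textbf{Step 1 (domination).} By induction on $t$, truncation only zeroes coordinates and $M$ has nonnegative entries, so $\trwalk^t(\vec{w}) \le M^t\vec{w}$ coordinatewise.

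\textbf{Step 2 (per-$t$ count).} Fix $t\ge 1$. If $v\in\supp(\trwalk^t(\vec{w}))$, then that coordinate survived the last truncation, so $(\trwalk^t\vec{w})_v > \minp$. By Step 1, $(M^t\vec{w})_v=(M^t)_{v,w}>\minp$. Summing over $w$,
\[
\minp\cdot\bigl|\{w : v\in\supp(\trwalk^t\vec{w})\}\bigr| < \sum_w (M^t)_{v,w} = 1,
\]
because $M^t$ is also doubly stochastic. So fewer than $1/\minp$ sources contribute at step $t$. For $t=0$, only $w=v$ contributes.

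\textbf{Step 3 (union over $t$).} Summing over $t\in[0,\ell]$ gives $|\ball(v)| \le 1+\ell/\minp$. To reach exactly $\ell/\minp$, absorb the $t=0$ term: $v$ itself already satisfies $(\trwalk^1\vec{v})_v \ge M_{v,v}\ge 1/2>\minp$, so it is counted at $t=1$. Alternatively, tighten the strict inequality in Step 2 to at most $\lceil 1/\minp\rceil - 1$; since $\minp$ is tiny, either fix works.

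The only delicate point is this off-by-one bookkeeping. The substantive content is just the symmetry, hence double stochasticity, of $M$, which converts a bound on mass flowing out of $w$ into a bound on mass flowing into $v$.
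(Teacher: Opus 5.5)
Your proof is correct and is essentially the standard argument for this claim in \cite{KSS:21}; the paper cites the claim without reproving it. The argument is: dominate $\trwalk^t\vec{w}$ by $M^t\vec{w}$, use the symmetry of $M$ to see that the sources $w$ with $(M^t)_{v,w} > \minp$ are entries of the probability vector $M^t\vec{v}$, so there are fewer than $1/\minp$ of them per step, and take a union over $t$.

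Keep your first fix for the $t=0$ term, namely that $v$ already lies in the $t=1$ set since $M_{v,v}\ge 1/2>\minp$. Drop the ceiling-based alternative. When $1/\minp\notin\mathbb{N}$ it only yields $1+\ell\lfloor 1/\minp\rfloor$, which exceeds $\ell/\minp$ whenever the fractional part of $1/\minp$ is below $1/\ell$. The smallness of $\minp$ does not rescue it.
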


\subsection*{The Clustering Subroutine}

The central algorithmic subroutine in this framework is a local routine that attempts to extract a low-conductance cut near a vertex $v$. This subroutine is reproduced from \cite{KSS:21} for completeness.

\medskip
\noindent
\clustercode

\medskip
The inputs $t_v$ and $k_v$ are drawn from random sources. Our derandomization effort focuses on simulating these with pseudorandom distributions while maintaining correctness guarantees. We note that the set $L_{v,t,k}$ can in fact be efficiently computed. Indeed, suppose we know the values $t_v$ and $k_v$ (for details on how to do this, see \Sec{globalg}). The remaining algorithmic work in computing this level set is done by performing truncated diffusions repeatedly for $t_v$ steps. One needs only track all the vertices which are reachable with at least some $\minp = \poly(\eps)$ probability. Among these vertices, one just picks the top $k$ and this is the level set $L_{v,t,k}$.

\subsection*{Global Algorithm and Randomness Organization} \label{sec:globalg}
We now describe the global clustering algorithm (originally from \cite{KSS:21}) with our new randomness bookkeeping. In our implementation, the randomness is split into two sources:

\begin{itemize}
  \item  $\bS_1$, which is a random seed fed to the pseudorandom generators for producing the phase assignments $h_v$'s and diffusion lengths $t_v$'s; 
  \item $\bS_2$, which contains the small amount of randomness used during preprocessing (to compute $k_v$ via \findr).
\end{itemize}

We view $\bS_1$ as the concatenation of $\overline{h}$ small seeds:
\[
\bS_1 = \bS_1(1) \circ \bS_1(2) \circ \cdots \circ \bS_1(\overline{h}).
\]
Here, $\overline{h}$ is some value which is at most $ \leq 2/\delta$ and it denotes a ``phase-cutoff'' parameter -- that is, we focus on the first $\overline{h}$ phases as intuitively, there are not too many vertices after this phase. For each $h \leq \overline{h}$, $\bS_1(h)$ is used to generate the following two $b$-wise independent collections over $V$.

\begin{remark} \label{rem:s1-randomness}
From Construction 3.32 in \cite{Vadhan12}, the following bounds are known on the length of the random seeds, $\bS_1(1)$ through $\bS_1({\overline{h}})$ used to generate $\{H_v\}$'s and $\{t_v\}$'s

\begin{itemize}
  \item $\{ H_v^{(h)} \sim \texttt{Ber}(\delta) \}_{v \in [N]}$: to decide if $v$ participates in phase $h$. Across all the $\overline{h}$ phases, producing this collection consumes a total of $O(\overline{h} b \cdot \log N)$. \label{itm:phasebits}
  \item $\{ T_v^{(h)} \sim \texttt{Unif}([1,\ell]) \}_{v \in [N]}$: diffusion length for $v$ in phase $h$. Across all the $\overline{h}$ phases, producing this collection consumes a total of $O(\overline{h} b \cdot \log \ell \cdot \log N)$. \label{itm:timestepbits}
\end{itemize}
\end{remark}

Given these collections, each index $v \in [N]$ sets its phase $h_v := \min \{ h : H_v^{(h)} = 1 \}$ and its diffusion length $t_v := T_v^{(h_v)}$.

The final piece is the conductance threshold $k_v = k(h_v)$, which is computed using the \findr routine. Since this procedure already has low randomness requirements, we abstract it as depending only on $\bS_2$.

We now reproduce the full global partitioning algorithm, originally from \cite{KSS:21}, but updated to reflect our structured randomness sources $\bS_1$ and $\bS_2$. This algorithm will later be locally simulated.

\medskip
\noindent
\globpartcode

\medskip
Our contribution is the analysis showing that it can be implemented using only $O(\delta^{-1} \log N)$ bits of randomness via $b$-wise independent sources and suitable pseudorandom generators. We note that in the standard model we use only $O(\delta^{-1} \log N)$ random bits.

\subsection{The \findr Subroutine and the Role of $k_v$} \label{sec:findrmatters}

As described earlier, the conductance thresholds $\{k_v\}$ are computed via the \findr subroutine. While the subroutine itself is not changed, we summarize its purpose here. First, a little notation. At the start of phase $h$, the set of unclustered vertices will be denoted $F_h$. The current free set will be denoted as $F$. The set of vertices with phase value being $h$ is denoted $V_h$, and the set of vertices with phase value at least $h$ is denoted $V_{\geq h}$.

\paragraph{Why \findr Matters.}
As the partitioning proceeds, new clusters may overlap heavily with previously clustered vertices. In such cases, $\cluster(v)$ might have low conductance in $G$ but high conductance in the current free set $F$. This could result in poor amortization of edge cuts. To mitigate this, \cite{KSS:21} show that ``good'' clusters occasionally arise—ones that add many new vertices while cutting few edges. The size thresholds $k_h$ help identify these.
Fortunately, the randomness used by \findr is already limited: it samples only a small number of vertices and evaluates their diffusion vectors. Thus, we treat its randomness usage as encapsulated in $\bS_2$. We finish with a high level sketch of the \findr procedure. \findr proceeds phase-by-phase. In each phase $h$, it:
\begin{itemize}
  \item Samples a small set of vertices from $V_{\geq h}$;
  \item For each such vertex $v$ and each $t \in [\ell]$, checks whether $\level{v}{t}{k}$ satisfies:
    \begin{itemize}
      \item (i) it's a level set in $\trwalk^t(\vec{v})$, 
      \item (ii) $\Phi(\level{v}{t}{k} \cup \{v\}) < \phi$, 
      \item (iii) $|\level{v}{t}{k} \cap F|$ is large.
    \end{itemize}
  \item Picks the smallest $k$ for which enough vertices satisfy all three.
\end{itemize}

Thus, in each phase \findr algorithm computes a value $k_h$ which is nice enough in the sense that enough vertices in the set satisfy that diffusions of many different lengths reveal a cluster containing enough free vertices. This is captured in the following definition.

\begin{definition} \label{def:viable}
    We call a vertex $v$ with phase value $h$ or larger as $(h,k)$-viable if there are at least $\frac{\sizefrac}{\log^2(1/\sizefrac)} \cdot \ell$ many choices of timesteps $t_v$ from $v$ which satisfy $|\mathtt{cluster}(v,t_v,k) \cap F| \geq \sizefrac^3 k$.
\end{definition}

Armed with this definition, we can now reproduce the $\findr(\bS_2)$ procedure from \cite{KSS:21}.

\medskip
\noindent
\findrcode

\medskip
The structural guarantee below (from \cite{KSS:21}) is the key tool enabling this subroutine.

\begin{theorem}[Isoperimetric Guarantee from \cite{KSS:21}]
\label{thm:enough_h_t}
If $|F_h| \geq \beta n$, then there exists a threshold $k \leq \rho$ such that at least $(\beta^2 / \log^2(1/\beta)) n$ vertices in $F_h$, there are at least $(\beta / \log^2(1/\beta)) \cdot \ell$ values of $t \in [\ell]$, such that there exists $k'$ between $k$ and $2k$ for which the set $\level{v}{t}{k'}$ satisfies all three of the conditions below: 
\begin{itemize}
      \item it's a level set in $\trwalk^t(\vec{v})$, 
      \item $\Phi(\level{v}{t}{k'} \cup \{v\}) < \phi$, 
      \item $|\level{v}{t}{k'} \cap F| \geq \sizefrac^3 k$
    \end{itemize}
\end{theorem}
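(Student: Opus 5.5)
The plan is to reconstruct the argument of \cite{KSS:21} in three layers: a structural decomposition, a per-vertex diffusion analysis, and a final pigeonhole over thresholds. Since this statement involves no randomness from $\bS_1$ or $\bS_2$, the $b$-wise independence plays no role here. It is a purely deterministic structural fact about $G$ and the free set $F = F_h$.

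\textbf{Step 1 (structural decomposition).} Since $G$ is a bounded degree minor-free graph, the separator theorem \cite{AST:94} gives a hyperfinite decomposition. Removing at most $\phi^2 dn$ edges, say, leaves pieces $P_1, P_2, \ldots$, each of size $\poly(1/\phi)$. Moreover, each piece has conductance at most about $\phi^2$ inside $G$, since we may recursively cut along separators until this holds. Call a piece \emph{rich} if $|P \cap F| \geq (\sizefrac/2)|P|$. By averaging, the rich pieces contain at least $\sizefrac n/2$ vertices of $F$. Within a rich piece, a Markov argument on escape probabilities applies: for a lazy walk of length $t \le \ell$, the expected escape probability from a uniform start in $P$ is at most $t\,\Phi(P)$. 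Hence most vertices $v \in P \cap F$ keep at least half of the mass of $M^t \vec{v}$ inside $P$ for all $t \le \ell$. The truncation in $\trwalk$ removes at most $\minp$ per support vertex per step. Since the support has size at most $1/\minp$, this is also controlled, and the same conclusion holds for $\trwalk^t \vec{v}$.

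\textbf{Step 2 (low-conductance level sets for many $t$).} Fix such a good vertex $v$. I would run a Lov\'asz--Simonovits curve argument on $\trwalk^t \vec{v}$. Suppose that for more than a $(1-\sizefrac/\log^2(1/\sizefrac))$ fraction of $t \in [\ell]$, every level set $\level{v}{t}{k'}$ of size up to about $2|P|$ had conductance at least $\phi$. Then the LS curve would flatten fast enough that, after $\ell = \lenval$ steps, the mass on any $|P|$ vertices would fall below $1/2$. This contradicts Step 1. So for at least $(\sizefrac/\log^2(1/\sizefrac))\ell$ values of $t$, some level set $L$ of size $k' \le 2|P| \le \minp^{-1}$ has $\Phi(L \cup \{v\}) < \phi$. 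Such an $L$ automatically lies in $\supp(\trwalk^t\vec{v})$ because it is a top set among positive coordinates.

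\textbf{Step 3 (free vertices and pigeonholing $k$).} This is the step I expect to be the main obstacle: showing $|L \cap F| \geq \sizefrac^3 k$. The idea is that $L$ carries a constant fraction of its mass inside the rich piece $P$, since mass outside $P$ is at most the escape mass. It remains to show that $L$ is not concentrated on the clustered part $P \setminus F$. I would first try to refine the decomposition by taking pieces that are rich at every scale. This means refining by removing clustered subsets of $P$ that themselves have low conductance, with only a $\sizefrac$-factor loss, in the spirit of the KSS amortization. That would force any low-conductance set built from mass at $v \in F$ to meet $F$ in a $\sizefrac^{2}$ to $\sizefrac^{3}$ fraction. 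Finally, bucket the sizes $k'$ dyadically into $O(\log(1/\minp))$ classes $[k,2k]$. Pigeonhole over good vertices (at least $\sizefrac^2 n/\log^2(1/\sizefrac)$ after the losses above) and over their good timesteps. This yields a single $k$ that works for the claimed number of vertices. The $\log^2$ factors absorb the bucket count once the parameters $\minp,\ell,\phi$ are set as in the parameter list.
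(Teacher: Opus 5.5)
The paper gives no proof of this statement; it quotes it from \cite{KSS:21}. Measured against what the statement requires, your proposal has a genuine gap, exactly where you suspected: Step 3.

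\textbf{The main gap.} Steps 1--2, once repaired, give for many $v \in F$ and many $t$ \emph{some} level set of $\trwalk^t\vec v$ with conductance below $\phi$. That set sits at a scale you do not control, and nothing in your argument ties it to $F$. Richness of $P$ only says a $\beta/2$ fraction of $P$ is free; it says nothing about where the diffusion from $v$ goes inside $P$. For example, suppose a free $v$ has most of its edges into an old low-conductance cluster $C_{\mathrm{old}}$. Then the diffusion pools in $\{v\}\cup C_{\mathrm{old}}$, that is the level set LS hands you, and $|L\cap F| = 1$. Such $v$ are genuinely bad, which is why the statement promises only about $\beta^2 n$ good vertices rather than all free vertices in rich pieces.

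Refining to pieces ``rich at every scale'' cannot fix this, for two reasons. First, the walk runs in $G$, not in the refined pieces. Clustered regions are only guaranteed conductance about $\phi$, and $\ell\phi = d^6\eps^{-20} \gg 1$, so mass still flows into them. Second, separating free from clustered parts costs $|E(F,V\setminus F)|$, which has no a priori bound; bounding it is the purpose of the whole analysis.

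\textbf{What is missing.} Two ingredients are needed.
\begin{itemize}
\item A lower bound on the diffusion mass that lands in $F$. The lazy walk matrix $M = I-(D-A)/(2d)$ is symmetric, doubly stochastic, and PSD. Projecting onto $\mathbf{1}/\sqrt n$ gives $\mathbf{1}_F^{\top}M^t\mathbf{1}_F \ge |F|^2/n \ge \beta|F|$. Averaging then yields about $\beta^2 n$ vertices $v\in F$, each with about $\beta\ell$ timesteps at which at least $\beta/2$ of $M^t\vec v$ lies on $F$. These are exactly the shapes $\beta^2 n$ and $\beta\ell$ in the statement, which suggests this is the intended route.
\item A scale-sensitive Lov\'asz--Simonovits/bucketing argument placing the low-conductance level set at a scale $k'\in[k,2k]$ where this free mass actually sits, not merely on $v$ itself. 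Then $|L\cap F| \gtrsim k \cdot(\text{free mass there})$, because the $k$-th largest coordinate is at most $1/k$.
\end{itemize}

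\textbf{Smaller repairs.}
\begin{itemize}
\item Cutting further along separators only \emph{raises} conductance: some child always has $\Phi \ge \Phi(P)$. So ``each piece has conductance at most $\phi^2$'' must be replaced by Markov averaging that discards the few vertices in high-conductance pieces.
\item The decomposition parameter must be tied to $\ell$, not $\phi$. With $\ell = d^6\eps^{-30}$ and $\phi=\eps^{10}$ you have $\ell\phi^2 \gg 1$, so the escape bound $t\,\Phi(P)$ is vacuous. You need cut fraction $\ll \beta/\ell$; the pieces then still have size $\poly(d/\eps)\ll 1/\rho$.
\item Your truncation-loss bound is vacuous as written. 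Losing up to $\rho$ on each of up to $(d+1)/\rho$ coordinates is $\Theta(d)$ per step. The correct control uses localization: the loss inside $P$ is at most $\rho|P|$ per step, hence at most $\ell\rho|P| \ll 1$ in total.
\item Dyadic pigeonholing over $k \le 1/\rho$ costs a factor $\log(1/\rho) = \Theta(\log d + \log(1/\eps))$ in each count. That factor depends on $d$, so it is not absorbed by $\log^2(1/\beta)$ as you claim.
\end{itemize}
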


\subsection{Two Core Theorems in the $b$-wise Setting} \label{sec:corethms}

The rest of the paper is devoted to verifying that the guarantees of \cite{KSS:21} continue to hold when the randomness is replaced by limited-independence sources. In particular, we prove two structural theorems:

\begin{restatable}{theorem}{thmourfindr}
\label{thm:findrv2}
With probability at least $1 - \exp(1/\eps)$ over the choice of randomness in $\bS_1 \circ \bS_2$, the following holds for all phases $h \le \overline{h}$. If $|F_h| \ge \sizefrac n$, then at least $\sizefrac^5 \delta n$ vertices in $V_h$ are $(h,k_h)$-viable.
\end{restatable}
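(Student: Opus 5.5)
We fix a phase $h \le \overline{h}$ with $|F_h| \ge \sizefrac n$ and condition on the randomness of phases $1,\dots,h-1$. The set $F_h$ is then determined. The remaining randomness consists of the seeds $\bS_1(h),\dots$ together with $\bS_2$. The plan has three steps.

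\textbf{Step 1: a fixed good threshold exists.} By \Thm{enough_h_t}, applied to the now-fixed set $F_h$, there is a threshold $k^*$ and a set $W \subseteq F_h$ with $|W| \ge (\sizefrac^2/\log^2(1/\sizefrac))n$ such that every $v \in W$ has many good timesteps. Whether a vertex is $(h,k)$-viable (\Def{viable}) is a deterministic function of the vertex, $k$, and $F_h$; it does not depend on $t_v$, since viability counts over all choices of $t$. So for each $k$ we can write $A_k$ for the fixed set of vertices that would be $(h,k)$-viable if they lie in $V_{\ge h}$. We then have $W \subseteq A_{k^*}$.

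\textbf{Step 2: concentration under $b$-wise independence.} The viable vertices in $V_h$ are the vertices $v \in A_k$ with $H^{(h)}_v = 1$. The event $v \in V_{\ge h}$ depends only on earlier phases, and we are conditioning on those. Conditioning is harmless here, because the $H^{(h)}_v$ are generated from a fresh seed $\bS_1(h)$, independent of earlier seeds. Hence $X_k = \sum_{v \in A_k \cap V_{\ge h}} H^{(h)}_v$ is a sum of $b$-wise independent $\texttt{Ber}(\delta)$ variables, and its mean is $\delta|A_k \cap V_{\ge h}|$.

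For this mean to be large, $A_{k^*}$ must meet $V_{\ge h}$ substantially. Vertices of $F_h$ unclustered so far, other than those already processed, lie in $V_{\ge h}$; more precisely, I would argue that $F_h \subseteq V_{\ge h}$ up to negligible loss, since a vertex of phase $<h$ clusters itself. This gives mean at least $\delta \sizefrac^2 n/\log^2(1/\sizefrac)$. A $b$-th moment (or simply fourth-moment) tail bound for $b$-wise independent sums then gives $X_k \ge \frac12 \EX[X_k]$ except with probability $O\big((b/\EX[X_k])^{b/2}\big)$, which is tiny. The same argument shows that $|V_{\ge h}|$ and $|V_h|$ concentrate around their means.

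\textbf{Step 3: \findr picks a good $k$.} This is the main obstacle. We must show that \findr, using $\bS_2$, outputs some $k_h$ whose viable set in $V_h$ has size at least $\sizefrac^5\delta n$, not merely that some $k$ works. I would handle it with a two-sided sampling argument over the $O(\sizefrac^{-8})$ samples drawn in \Step{setthresh}.

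First, for $k^*$: the fraction of viable vertices among $V_{\ge h}$ is at least about $\sizefrac^2/\log^2(1/\sizefrac) \gg 12\sizefrac^4$. Since the samples are fully independent uniform draws from $\bS_2$, a Chernoff bound shows $k^*$ passes the test. The empirical check in \findr (that a cluster with large free intersection exists) differs slightly from \Def{viable}, which counts timesteps. I would bridge this using the uniform $t_s$: a viable $s$ has a good $t_s$ with probability at least $\sizefrac/\log^2(1/\sizefrac)$.

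Second, for any $k$ with fewer than $\sizefrac^5\delta n$ viable vertices in $V_h$: this $k$ has viable fraction in $V_{\ge h}$ at most about $\sizefrac^5 \ll 12\sizefrac^4$, so with high probability it fails the test. A union bound over the $\minp^{-1}$ values of $k$ is affordable, since the failure probability is $\exp(-\poly(1/\sizefrac))$.

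Translating from fractions in $V_{\ge h}$ to counts in $V_h$ uses Step 2 together with the phase structure: $V_h$ is a $\delta$-fraction of $V_{\ge h}$, by $b$-wise concentration applied to each $A_k$. Finally, a union bound over the $\overline{h} \le 2/\delta$ phases and the $\minp^{-1}$ values of $k$ yields overall failure probability $\exp(-1/\eps)$, because every per-event bound is $\exp(-\poly(1/\eps))$ given the parameter choices $b = 4\ell\minp^{-1}$ and $\sizefrac = \eps/10$.
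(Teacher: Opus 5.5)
Your plan follows the paper's route: the threshold $k^*$ from \Thm{enough_h_t} passes the test, thresholds with few viable vertices fail it, and $b$-wise concentration transfers counts from $V_{\ge h}$ to $V_h$ (the paper's Errors 2, 4 and 5). But the ``bad thresholds fail'' half of your Step 3 does not work as you set it up.

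In Step 1 you fix viability in the sense of \Def{viable}, which is independent of $t_v$. The test in \Step{setthresh}, however, counts \emph{marked} samples, and $s$ is marked iff $\cluster(s,t_s,k)$ is good for its single realized $t_s$. You bridge marked and viable only in the lower-tail direction, for $k^*$. In the upper tail, a non-viable vertex can still be marked with probability just below $\beta/\log^2(1/\beta)$, which dwarfs $12\beta^4$. For example, suppose that for some $k$ every vertex of $V_{\ge h}$ has exactly a $\beta/(2\log^2(1/\beta))$ fraction of good timesteps. Then $A_k=\emptyset$, yet the expected marked fraction is $\beta/(2\log^2(1/\beta))\gg 12\beta^4$. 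So $k$ passes with high probability, and \findr, which may output any passing $k$, can return it with no viable vertex in $V_h$. Your Chernoff bound controls the number of viable samples, not the number of marked ones.

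A smaller slip: a viable set of size $\beta^5 n$ has fraction up to $\beta^5 n/|V_{\ge h}|\le\beta^4$ in $V_{\ge h}$, not $\beta^5$. The factor-$12$ margin still suffices, but there is no ``$\ll$''.

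The paper's Error 4 makes the same identification without comment. It says each sample is viable with probability at most $2\beta^4$, while the test counts marked vertices. \Clm{cutvalbnd} uses viability as $|C_v\cap F_v|\ge\beta^3 k_h/8$ for the actual cluster $C_v$. So the notion that makes the argument coherent is the realized one: $v$ is $(h,k)$-viable iff $|\cluster(v,t_v,k)\cap F_h|\ge\beta^3 k$ for its own $t_v$. With that definition, marked samples are exactly viable samples, and both tails of your sampling argument go through.

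Your Step 2 must then change, since viability is no longer a deterministic set $A_k$. Write $p_v$ for $v$'s fraction of good timesteps. You need $b$-wise concentration of $\sum_{v\in V_{\ge h}}H^{(h)}_v\mathbf{1}[T^{(h)}_v \text{ good for } v]$ (a sum with heterogeneous means $\delta p_v$) around $\delta\sum_v p_v$. You also need concentration of the realized number of good vertices in $V_{\ge h}$ around $\sum_v p_v$, since that is the quantity the sample estimates. With these:
\begin{itemize}
\item $k^*$ has $\sum_v p_v\ge\beta^3 n/\log^4(1/\beta)$, hence a marked fraction $\gtrsim\beta^3/\log^4(1/\beta)\gg 12\beta^4$, so it passes.
\item Any passing $k$ has $\sum_v p_v=\Omega(\beta^5 n)$, and with the factor-$12$ slack this gives at least $\beta^5\delta n$ viable vertices in $V_h$.
\end{itemize}
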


\begin{restatable}{theorem}{thmedgecutbnd}
\label{thm:edgecut:bnd}
The expected number of edges cut by the global partitioning procedure $\globpart(\bS_1 \circ \bS_2)$ is at most $\varepsilon nd$.
\end{restatable}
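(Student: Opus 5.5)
The plan is to follow the charging argument of \cite{KSS:21} and check that every probabilistic step it uses involves only $O(b)$ of the variables $\{H_v^{(h)}, T_v^{(h)}\}$ at a time. We split the cut edges into three groups.
\begin{itemize}
\item Edges cut while clustering in phases $h \le \overline{h}$ with $|F_h| \ge \sizefrac n$.
\item Edges incident to vertices still free once $|F_h| < \sizefrac n$. Every such vertex ends up essentially a singleton, so at most $\sizefrac dn = \eps dn/10$ edges are cut here.
\item Edges incident to vertices whose phase exceeds $\overline{h}$, or which are never activated.
\end{itemize}
For the third group we use only pairwise independence of $H_v^{(h)}$ within a phase and independence of the seeds $\bS_1(h)$ across phases. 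This gives $\Pr[h_v > \overline{h}] \le (1-\delta)^{2/\delta} \le e^{-2}$, so $\overline{h}$ must be enlarged to $O(\delta^{-1}\log(1/\eps))$, which costs only a constant factor in the seed length. The expected contribution of this group is then at most $\eps dn/10$.

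The first group is the core, and we bound it by the KSS amortization. Whenever $\cluster(v,t_v,k_v)$ returns a non-singleton set $C$, it cuts at most $\phi d|C| + d|C\setminus F|$ edges of $F$. The term $\phi d|C| \le \phi d\cdot 2k_v$ is charged to the at least $\sizefrac^3 k_v$ newly clustered free vertices, which costs $O(\phi d/\sizefrac^3)$ per vertex and hence $O(\eps^{7} dn)$ in total. The difficult term is the overlap $|C\setminus F|$, that is, edges cut because earlier clusters already used vertices of $C$.

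KSS control the overlap by bounding, for each vertex $u$, the expected number of clusters in its phase whose support contains $u$. Those supports are contained in $\ball(u)$. By \Clm{ballsize}, $|\ball(u)| \le \ell/\minp$, so only the events $\{H_w^{(h)}, T_w^{(h)} : w \in \ball(u)\}$ matter. For each $u$ and $h$ this is a computation of expectation (not a tail bound) over at most $\ell\minp^{-1} \le b/4$ vertices. By $b$-wise independence, the joint law of these variables is exactly the fully independent one. Hence every per-vertex expectation in the KSS proof, in particular the statement that $u$ is hit by $O(\delta \ell/\minp)$ phase-$h$ clusters in expectation, carries over verbatim.

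Summing per-vertex expectations by linearity then gives the global bound with no union bound needed. This is why we aim for an expected edge count rather than a high-probability one. The dependence on $\bS_2$ enters only through the thresholds $k_h$. We condition on the good event of \Thm{findrv2}, which holds with probability $1-\exp(-1/\eps)$ and guarantees $\sizefrac^5\delta n$ viable vertices per phase. Each viable vertex in $V_h$ makes progress with probability $\ge \sizefrac/\log^2(1/\sizefrac)$ over its own $T_v^{(h)}$, which gives $\Omega(\mathrm{poly}(\eps)\delta n)$ new clustered vertices per phase in expectation.

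The main obstacle is the conditioning. The free set $F$ at the time $v$ is processed depends on the randomness of all earlier vertices, not only on $\ball(v)$. So the per-vertex expectation must be computed conditionally on $F$, which in principle depends on all the variables. We first try to show that the event that $v$'s cluster is charged to $u$ depends only on variables of vertices within the ``ball of balls'' around $u$, of size $(\ell/\minp)^2$. This would require choosing $b$ of order $(\ell/\minp)^2$ rather than $4\ell\minp^{-1}$ (still $\poly(d/\eps)$). Alternatively, we bound the overlap term crudely, by the number of earlier same-phase clusters meeting $\ball(u)$, which is a quantity that is local. The bad complement of \Thm{findrv2} contributes at most $\exp(-1/\eps)\cdot dn$. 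Adding the three groups gives at most $\eps nd$.
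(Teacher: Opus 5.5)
\textbf{Gap 1: the core amortization (your group 1).} You charge the cost $\phi d|C|\le 2\phi d k_v$ of every non-singleton cluster to ``the at least $\sizefrac^3 k_v$ newly clustered free vertices'' of that same cluster. Nothing guarantees these vertices exist. $\cluster(v,t_v,k_v)$ returns a non-singleton whenever some level set has conductance at most $\phi$, with no condition involving $F$. Only viable vertices that draw a good timestep get $|C\cap F_h|\ge\sizefrac^3k_h$, and even those free vertices may be shared by many clusters of the same phase.

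The missing idea is the per-phase amortization of \Clm{cutvalbnd}:
\begin{itemize}
\item each of the $|V_h|\le 2\delta n$ clusters of phase $h$, viable or not, creates $O(\phi d^2k_h)$ new cut edges;
\item \Thm{findrv2} supplies $\Omega(\sizefrac^5\delta n)$ viable vertices in $V_h$, so $X_h=\sum_{v\in V_h}|\cluster(v)\cap F_h|$ is $\Omega(\poly(\sizefrac)\,\delta n k_h)$, including the timestep factor you correctly note;
\item hence the whole phase's cut is $O(\phi d^2\sizefrac^{-O(1)})\cdot X_h$, i.e.\ the viable clusters pay for the wasteful ones.
\end{itemize}
You mention the viable vertices, but you never set their contribution, which scales with $k_h$, against the phase's total cost.

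Conversely, the term $d|C\setminus F|$ you single out as the difficulty is spurious. An edge from $C\cap F$ to an already clustered vertex was cut when that vertex left $F$, so the newly cut edges lie in $E(C\cap F,F\setminus C)\subseteq E(C,\overline{C})$. Keeping the term would be fatal. It counts vertices clustered in \emph{earlier} phases, which no phase-local charging controls, and it can be of order $d\sum_h k_h|V_h|$, with no factor $\phi$ to bring it below $\eps nd$. Separately, singleton outputs $\{v\}$ with $v\in F$ during phases with $|F_h|\ge\sizefrac n$ fall into none of your three groups; the paper folds them into the per-cluster cost before amortizing.

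\textbf{Gap 2: the charging and the conditioning.} The charging step must give $\sum_h\EX[X_h]\le 4n$ (\Clm{newcharging}). A bound of $O(\delta\len/\minp)$ expected hits on $u$ per phase does not give this: summed over phases it is of order $\overline{h}\,\delta\len/\minp$ per vertex. The paper uses two facts:
\begin{itemize}
\item $u$ is counted only in the phase in which it leaves the free set;
\item conditioned on being hit at all in that phase, its expected number of hits is at most $4$, because $\Pr[\text{hit}]\ge\delta c/2$ while $\EX[\#\text{hits}]\le 2\delta c$ for the $c\le\len/\minp\le b/4$ candidates in $\ball(u)$.
\end{itemize}

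The conditioning obstacle you leave open disappears once you count against $F_h$ rather than $F_v$. $F_h$ and $V_{\ge h}$ are determined by the seeds of earlier phases, which are independent of $\bS_1(h)$. Moreover, $\cluster$ is computed in $G$ without reference to $F$. So, given the earlier phases and $k_h$, the multiplicity of $u$ in phase $h$ depends only on $\{H^{(h)}_w,T^{(h)}_w\}_{w\in\ball(u)}$. No ball of balls or larger $b$ is needed. Nothing is lost by this switch: the per-cluster cost does not involve $F$, and both the viability lower bound and the charging upper bound concern the $F_h$-version of $X_h$, so $F_v$ never needs to appear.

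Your treatment of the tail (never-activated vertices, enlarging $\overline{h}$ to $O(\delta^{-1}\log(1/\eps))$) is sound and sidesteps \Prop{bwise:basic}.
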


Together, these results yield a randomization-reduced implementation of the partition oracle construction in \cite{KSS:21}, with total randomness usage reduced to $O(\delta^{-1} \cdot \log n)$ bits. 

\begin{remark}\label{rem:remainingtasks}
    The rest of the properties of partition oracle listed in \Def{oracle} carry over. In particular, all the pieces are found by truncated diffusions (\Def{trundiff}). So, no set can have size bigger than $\ell/\rho$. Next, $\globpart(\bS_1 \circ \bS_2)$ adds connected components of $\cluster(v) \cap F$ to the partitioning it computes. Thus, it is indeed a bonafide partition oracle. For details, see Theorem 3.12 in \cite{KSS:21}. Note that truncated diffusions are deterministic and therefore \cite{KSS:21}'s analysis requires no additional guarantees from the randomness of $\bS_1 \circ \bS_2$ to hold.    
\end{remark}

\subsection{Tail Bounds under Limited Independence}
\label{sec:tail-bounds}
To analyze the behavior of our pseudorandom simulation, we rely on concentration bounds for limited-independence random variables. These allow us to work with $b$-wise independent distributions rather than full randomness, while still retaining strong probabilistic guarantees.

We first recall a standard bound (see \cite[Problem 3.8]{Vadhan12}):

\begin{fact}[Chernoff Bound for $b$-wise Independent Bernoulli Variables]
\label{fact:kwise:chernoff}
Let $X_1, \dots, X_T$ be $b$-wise independent Bernoulli random variables with success probability $\delta$. Let $X = \sum X_i$. Then for any $\eps > 0$,
\[
\Pr\left[|X - \delta T| \ge \eps T \right] \le \left( \frac{b^2}{4 T \eps^2} \right)^{b/2}.
\]
\end{fact}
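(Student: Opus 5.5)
We prove this by the standard $b$-th moment method: a Markov bound on the $b$-th central moment, where $b$-wise independence is exactly what is needed to expand and control that moment. We may assume $b$ is even; otherwise we apply the argument with $b-1$, which only changes constants, and we absorb that change into the statement as is customary. We may also assume $T \ge b^2/(4\eps^2)$, since otherwise the right-hand side is at least $1$ and there is nothing to prove. Set $Y_i := X_i - \delta$ and $S := \sum_i Y_i = X - \delta T$. Then each $Y_i$ has mean zero, $|Y_i| \le 1$, and $\EX[Y_i^2] = \delta(1-\delta) \le 1/4$. By Markov's inequality applied to the nonnegative variable $S^b$,
\[
\Pr\left[|X-\delta T| \ge \eps T\right] \le \frac{\EX[S^b]}{(\eps T)^b},
\]
so it suffices to show $\EX[S^b] \le (b^2 T/4)^{b/2}$. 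Multiplying out $(b^2 T/4)^{b/2}/(\eps T)^b$ gives exactly $\left(b^2/(4T\eps^2)\right)^{b/2}$.

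To bound the moment, expand $\EX[S^b] = \sum_{(i_1,\dots,i_b)\in[T]^b} \EX[Y_{i_1}\cdots Y_{i_b}]$. Each monomial involves at most $b$ distinct indices, so by $b$-wise independence its expectation factors as $\prod_j \EX[Y_j^{c_j}]$, where $c_j$ is the multiplicity of index $j$ in the tuple.
\begin{itemize}
\item If some index occurs exactly once, the corresponding factor is $\EX[Y_j]=0$, so the term vanishes.
\item Otherwise every distinct index occurs at least twice. Hence the number $m$ of distinct indices satisfies $m \le b/2$. Since $|Y_j|\le 1$, we have $|\EX[Y_j^{c}]| \le \EX[Y_j^2] \le 1/4$ for all $c \ge 2$, so the term is at most $4^{-m}$ in absolute value.
\end{itemize}
The number of surviving tuples with exactly $m$ distinct indices is at most $\binom{T}{m} m^b \le T^m m^b/m!$. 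Therefore
\[
\EX[S^b] \le \sum_{m=1}^{b/2} \frac{T^m m^b}{4^m\, m!}.
\]

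The main obstacle is turning this sum into the clean bound $(b^2T/4)^{b/2}$ without losing constants. The approach is to compare with the top term $m=b/2$, which equals
\[
\frac{T^{b/2}(b/2)^b}{4^{b/2}(b/2)!} = \left(\frac{b^2T}{4}\right)^{b/2}\cdot\frac{1}{4^{b/2}(b/2)!}.
\]
This already carries a slack factor of $4^{-b/2}/(b/2)!$. Because $T \ge b^2/(4\eps^2) \ge b$, the ratio of consecutive terms $T m^b(m+1)^{-b}/(4(m+1))$ is large enough that the terms with $m<b/2$ are geometrically dominated by the top term. Alternatively, one can simply bound each of the at most $b/2$ terms by $(b^2T/4)^{b/2}\cdot 4^{-b/2}$ and note that $(b/2)\,4^{-b/2} \le 1$. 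Either way, the slack absorbs the $b/2$ summands, giving $\EX[S^b] \le (b^2 T/4)^{b/2}$ and hence the claim.
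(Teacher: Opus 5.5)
For even $b$ your argument is correct. It is the standard $b$-th moment proof: Markov on $\EX[S^b]$, factoring each monomial by $b$-wise independence, killing monomials that contain a singleton index, and counting surviving tuples by $\binom{T}{m}m^b$. The paper supplies no proof of its own; it only cites Vadhan's Problem 3.8.

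The step that fails is your opening reduction for odd $b$. Dropping to $b-1$ does not ``only change constants.'' It replaces the exponent $b/2$ by $(b-1)/2$, losing a factor of order $(4T\eps^2/b^2)^{1/2}$, which is unbounded in $T$. No other argument can repair this, because the Fact is false for odd $b$:
\begin{itemize}
\item For $b=1$, let $X_1=\cdots=X_T$ be a single fair coin and take $\delta=\eps=1/2$. The left side is $1$ and the right side is $T^{-1/2}$.
\item For $b=3$, let $a$ be uniform in $\mathbb{F}_2^m$ and let $i$ range over the $T=2^m-1$ nonzero vectors. Take the pairwise independent signs $(-1)^{\langle a,i\rangle}$ and multiply them all by one independent uniform sign. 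Every product of at most three of these signs has mean zero, so the associated Bernoulli$(1/2)$ variables are $3$-wise independent. Yet $|X-T/2|=T/2$ whenever $a=0$, which happens with probability $1/(T+1)$. This exceeds the claimed bound $(9/T)^{3/2}$ for all $m\ge 10$.
\end{itemize}
So the Fact needs the hypothesis that $b$ is even, and your proof should assume it outright rather than assert a reduction. The paper loses nothing by this, since $b$ is a parameter it chooses.

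Two smaller points about your final step.
\begin{itemize}
\item The consecutive-term ratio you display is misstated. It should be $\frac{T}{4}\cdot\frac{(1+1/m)^b}{m+1}$.
\item Your alternative per-term route is the cleaner one, but its key inequality needs a line of proof. For $1\le m\le b/2$ and $T\ge 1$,
\[
\frac{T^m m^b}{4^m\, m!}\;\le\; T^{b/2}\Big(\frac{b}{4}\Big)^b \;=\; 4^{-b/2}\Big(\frac{b^2T}{4}\Big)^{b/2}.
\]
This holds because $m\mapsto b\ln(4m/b)-m\ln 4$ has derivative $b/m-\ln 4\ge 2-\ln 4>0$ on $[1,b/2]$ and vanishes at $m=b/2$, so $m^b/4^m\le (b/4)^b$.
\end{itemize}
Summing over at most $b/2$ values of $m$ and using $(b/2)4^{-b/2}\le 1$ then gives $\EX[S^b]\le (b^2T/4)^{b/2}$. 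This uses only $T\ge 1$, so you no longer need $T\ge b$. Your geometric route assumed $T\ge b$, which fails when $\eps>\sqrt{b}/2$ and was not justified there.
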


An immediate corollary controls the size of subsets selected by such variables:

\begin{corollary} \label{cor:kwise}
Fix $h < \overline{h}$ and a subset $F \subseteq V_{\ge h}$ of size at least $c n$. Then with probability at least $1 - \left( \frac{b^2}{c n \delta^2} \right)^{b/2}$, the number of vertices in $F$ assigned to phase $h$ satisfies
\[
\left| |V_h \cap F| - \delta \cdot c \cdot n \right| \le \delta \cdot c \cdot n / 2.
\]
\end{corollary}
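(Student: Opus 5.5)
The plan is to apply \Fact{kwise:chernoff} directly to the indicator variables of the vertices of $F$, after checking that these indicators are genuinely $b$-wise independent Bernoulli$(\delta)$ variables once we condition on the earlier phases. Concretely, for each $v \in F$ set $X_v := H_v^{(h)}$. The collection $\{H_v^{(h)}\}_{v}$ is generated from the seed $\bS_1(h)$ alone, and the seeds for different phases are independent. So the $X_v$ are $b$-wise independent $\texttt{Ber}(\delta)$ variables, independent of everything determined by phases $1,\dots,h-1$.

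Treating $F \subseteq V_{\ge h}$ as fixed is where the care is needed. The event $v \in V_{\ge h}$ depends only on $H_v^{(1)},\dots,H_v^{(h-1)}$, that is, on $\bS_1(1),\dots,\bS_1(h-1)$. Hence, conditioned on those seeds, any set $F$ determined by them (or chosen independently of $\bS_1(h)$) is a fixed set. Moreover, for $v \in F \subseteq V_{\ge h}$ we have $v \in V_h$ if and only if $H_v^{(h)}=1$. Therefore $|V_h \cap F| = \sum_{v\in F} X_v =: X$ exactly, with $T := |F|$ summands and $\EX[X] = \delta T$. I expect this conditioning argument to be the only real subtlety, and I will state it explicitly in the proof.

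With $T = |F|$ summands, \Fact{kwise:chernoff} with deviation parameter $\eps' := \delta/2$ gives
\[
\Pr\left[|X-\delta T| \ge \tfrac{\delta}{2} T\right] \le \left(\frac{b^2}{4T(\delta/2)^2}\right)^{b/2} = \left(\frac{b^2}{T\delta^2}\right)^{b/2} \le \left(\frac{b^2}{cn\delta^2}\right)^{b/2},
\]
where the last step uses $T \ge cn$. This matches the claimed probability.

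It remains to reconcile the conclusion with the centering $\delta c n$ in the statement. The bound above centers at $\delta|F|$ with radius $\delta|F|/2$. I will read the corollary with $|F| = cn$, by defining $c := |F|/n$ and treating ``at least'' as a lower bound on that $c$. In that reading the statement is literally the display above. If instead $|F| > cn$ is intended, the honest conclusion is $\tfrac12\delta|F| \le |V_h\cap F| \le \tfrac32\delta|F|$, which still yields the lower bound $|V_h \cap F| \ge \delta c n/2$. That lower bound is what downstream uses such as \Thm{findrv2} need, and I will state it in that form if required.
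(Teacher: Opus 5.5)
Your proposal is correct and follows the paper's proof, which is a one-line application of \Fact{kwise:chernoff} with deviation $\delta/2$; this gives $(b^2/(T\delta^2))^{b/2}$, as you computed. Your additions fill in points the paper's proof glosses over. You condition on $\bS_1(1),\dots,\bS_1(h-1)$ so that $F$ is a fixed set and $|V_h\cap F| = \sum_{v\in F} H_v^{(h)}$. You also note that the paper's choice $T = cn$ silently assumes $|F| = cn$, and that for $|F| > cn$ only the lower bound $|V_h\cap F| \ge \delta cn/2$ actually holds.
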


\begin{proof}
    This follows by letting $T = c n$ and $\eps = \delta/2$ in \Fact{kwise:chernoff}.
\end{proof}

These bounds will be repeatedly used to ensure that (i) phases contain enough vertices, (ii) sampling succeeds with high probability, and (iii) viable clusters occur frequently enough to support our cut bound argument.

Additionally, we will need the following guarantees about sampling from the $\bS_1$ part of the seed. Abusing notation, for a random-variable $\bS$ which takes on value over a set of strings of small size, we will use the notation $\psg(\bS)$ to refer to the distribution over $\poly(n)$-length strings obtained by running the pseudorandom generator on the string $\bS$.

\begin{claim}
    Fix an index $u \in [N]$. The probability that a random vertex sampled using the pseudorandom string $\bR_1 \sim \psg(\bS_1)$ satisfies the following:

    $$\mypr_{s_1 \sim \bR_1}\big(s = u\big) = 1/N,$$
    and for any fixed tuple of $t$ vertices $u_1, u_2, \ldots u_t$, we have
    $$\forall t \leq \alpha, \mypr_{s_1, s_2, \ldots s_t \sim \bR_1}\big(\forall i \leq t, s_i  = u_i\big) = 1/N^t.$$
\end{claim}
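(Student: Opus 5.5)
The claim says that samples drawn from the pseudorandom string $\bR_1 \sim \psg(\bS_1)$ are individually uniform over $[N]$ and jointly uniform on any $t \le \alpha$ of them. My plan is to get both properties directly from the explicit $b$-wise independent construction used in \Rem{s1-randomness} (Construction 3.32 in \cite{Vadhan12}), rather than from any property of the graph.

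\textbf{The construction.} Work over a finite field $\mathbb{F}$ with $|\mathbb{F}| \ge N$; for simplicity take $|\mathbb{F}| = N$, or embed $[N]$ into $\mathbb{F} = \mathbb{F}_{2^m}$ and truncate appropriately. The seed is a tuple $(a_0,\dots,a_{b-1}) \in \mathbb{F}^b$. The $j$th sampled vertex is $s_j := p(x_j)$, where $p(x) = \sum_{i<b} a_i x^i$ and the $x_j$ are fixed distinct evaluation points. The seed has length $O(b \log N)$, which matches the bound in \Rem{s1-randomness}.

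\textbf{Key steps, in order.}

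1. For a single sample, $p(x_j) = a_0 + (\text{terms depending on } a_1,\dots,a_{b-1})$. Since $a_0$ is uniform and independent of the other coefficients, $p(x_j)$ is uniform on $\mathbb{F}$. This gives $\Pr(s_j = u) = 1/N$.

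2. For $t$ samples with $t \le b$, fix targets $u_1,\dots,u_t$. Extend them to $b$ target values by adjoining $b-t$ further distinct points with arbitrary values. By Lagrange interpolation (equivalently, invertibility of the Vandermonde matrix on distinct points), the map from coefficients $(a_0,\dots,a_{b-1})$ to evaluations at $b$ distinct points is a bijection $\mathbb{F}^b \to \mathbb{F}^b$. Hence the evaluation vector is uniform on $\mathbb{F}^b$.

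3. Marginalizing onto the first $t$ coordinates, the vector $(p(x_1),\dots,p(x_t))$ is uniform on $\mathbb{F}^t$. This gives exactly $1/N^t$.

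4. The statement requires $t \le \alpha$. Since $\alpha < 1 \le b = 4\ell\rho^{-1}$, every such $t$ (indeed any $t \le b$) satisfies $t \le b$, so step 2 applies.

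\textbf{Main obstacle.} The delicate point is exactness when $N$ is not a prime power. If $|\mathbb{F}| = 2^m > N$ and we reduce modulo $N$ or truncate, the probabilities are only approximately $1/N$, within $O(N/2^m)$. I would handle this in one of two ways.

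The first option is to choose the label universe so that the field size equals $N$. This is possible because $N$ is a free model parameter and can be rounded up to a prime power without affecting any asymptotic bound.

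The second option is to take $2^m \ge N^{c}$ and carry an additive $N^{-\Omega(1)}$ error. This error is negligible in every later use, namely \Cor{kwise} and \findr.

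I would state the claim for the first option, since that yields the exact equalities as written.
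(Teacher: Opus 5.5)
Your argument is correct and is the standard proof that the polynomial family of Construction~3.32 in \cite{Vadhan12}, the generator the paper cites in \Rem{s1-randomness}, is $b$-wise independent with uniform marginals. The paper states this claim without proof and implicitly relies on exactly that construction, so your approach is the same; you are also right that the meaningful reading of ``$t\le\alpha$'' is $t\le b$, since $\alpha<1$ makes the literal condition vacuous.

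One refinement to your exactness discussion: if the seed is a uniform string of $L$ bits, every probability is a multiple of $2^{-L}$, so an exact value $1/N$ forces $N$ to be a power of $2$. You should therefore round up to a power of two and work over $\mathbb{F}_{2^m}$, with $N$ replaced by $2^m$ in the statement, rather than to an arbitrary prime power; otherwise use your additive-error option.
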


\begin{fact} 
\label{fact:phase-prob} 
Fix $1 < h < \overline{h}$, for any $v \in V$, $\mypr[v \in V_h \ | \ v \in V_{\geq h}] = \delta$.
\end{fact}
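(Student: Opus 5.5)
The statement to prove is \Fact{phase-prob}: for $1<h<\overline{h}$ and any $v\in V$, $\mypr[v\in V_h \mid v\in V_{\geq h}]=\delta$. The approach is to unfold the definition of the phase, $h_v=\min\{h' : H_v^{(h')}=1\}$, and use that the phase collections for different $h'$ come from independent seeds $\bS_1(h')$. The event $\{v\in V_{\geq h}\}$ equals $\bigcap_{h'<h}\{H_v^{(h')}=0\}$. The event $\{v\in V_h\}$ equals that same intersection together with $\{H_v^{(h)}=1\}$. So the conditional probability is a ratio of two probabilities of intersections of events about the single vertex $v$, one from each phase.

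The steps are as follows.

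\begin{enumerate}
\item Observe that $\bS_1=\bS_1(1)\circ\cdots\circ\bS_1(\overline{h})$ is a concatenation of independent uniform seeds. Since $H_v^{(h')}$ is a function of $\bS_1(h')$ alone, the variables $H_v^{(1)},\dots,H_v^{(\overline{h})}$ are mutually independent. This holds for a fixed vertex $v$ and needs no $b$-wise independence across vertices.
\item Use the marginal guarantee of Construction 3.32 in \cite{Vadhan12} (the $b\ge 1$ case of $b$-wise independence). It gives $\mypr[H_v^{(h')}=1]=\delta$ exactly for every $v$ and every $h'$.
\item Compute the two probabilities. We have $\mypr[v\in V_{\geq h}]=(1-\delta)^{h-1}>0$, so the conditioning is well defined. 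We also have $\mypr[v\in V_h]=(1-\delta)^{h-1}\delta$. Dividing gives $\delta$.
\end{enumerate}

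The only step that needs care is the exactness of the marginal $\mypr[H_v^{(h')}=1]=\delta$. A $\texttt{Ber}(\delta)$ variable produced from a finite seed is exact only if $\delta$ is a dyadic rational, or more generally a ratio matching the field size used in the construction. I would handle this by choosing $\delta$ to be a power of $2$ no larger than the stated value $\deltaval$. Alternatively, one can generate each $H_v^{(h')}$ by thresholding a $b$-wise independent uniform element of a field of size $2^m$, with $\delta$ a multiple of $2^{-m}$. This changes the parameters by at most a constant factor and does not affect any later bounds.

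The remainder is routine. The restriction $1<h<\overline{h}$ only ensures that the phase $h$ is among those actually generated.
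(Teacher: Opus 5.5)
Your proof is correct. It is exactly the reasoning the paper leaves implicit: the paper states this fact without proof, relying on $h_v=\min\{h' : H_v^{(h')}=1\}$, the independence of the seed blocks $\bS_1(h')$ across phases, and the exact marginal $\mypr[H_v^{(h')}=1]=\delta$, which together give the ratio $(1-\delta)^{h-1}\delta/(1-\delta)^{h-1}=\delta$. Your caveat is a legitimate point the paper glosses over: an exact $\texttt{Ber}(\delta)$ marginal requires $\delta$ to be dyadic, or a multiple of $2^{-m}$ for the field used. Your fix of rounding $\delta$ down to a power of $2$ does not affect the later bounds.
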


\section{Proof of \texorpdfstring{\Thm{findrv2}}
{Theorem~\ref{thm:findrv2}}}

\thmourfindr*

\begin{proof}[Proof of \Thm{findrv2}]
We will mimic the proof of Theorem 4.10 from \cite{KSS:21} and, similar to that proof, we have two steps. \\
\noindent \textbf{Step 1:} We will show if $|F_h| \geq \beta n$ then, with high probability, $\findr(\bS_2)$  returns a non-zero value for $k_h$.
    
\noindent \textbf{Step 2:} Next, we will show that with high probability, if a non-zero $k_h$ is returned, then $V_{h}$ contains at least $2\beta^5 \delta n$ viable vertices, which will finish the proof.
    
    Fix an $h$. As the premise asserts, let us consider any choice of randomness in the preceding $h-1$ phases such that $|F_h| \geq \beta n$. We also note that $V_{\geq h} \supseteq F_h$ and therefore $|V_{\geq h}| \geq |F_h| \geq \beta n$. \\
    
    \noindent \textbf{Proof of Claims laid out in Step 1:} We claim that with high probability over the randomness in $\bS_2$, a call to $\findr(\bS_2)$ successfully processes line 1.(d). To establish our claim, we track some bad events.  First, we will show an easier claim, $\texttt{findr}(\bS_2)$ reaches Step 1.(c) with a high probability. Then we will show $\texttt{findr}(\bS_2)$ reaches Step 1.(d) with a high probability. In the end, we will argue that the $k_h$ returned by $\texttt{findr}(\bS_2)$ is positive with high probability. \\
    
    \noindent\textbf{Error 1: The probability $\texttt{findr}(\bS_2)$ fails to reach Step 1.(c) is low} 

    Note that if we pick a vertex $s$ from $V(G)$ using the pseudorandom generator, we have 
    $$\mypr_{\bS_2}(s \in S_h) = \frac{|V_{\geq h}|}{|V|} \geq \beta.$$
    This means, $\myex[|S_h|] \geq \beta^{-9}$.    

    Now, let us control the probability of error 1. So, we use Chernoff Bound to obtain. 
    
    $$\mypr_{\bS_2} \Big(|S_h| \leq \myex[|S_h|]/2 \Big)	 \leq \exp(-\beta^{-9}/8) \leq \exp(-\beta^{-8}).$$

    This bound holds because all the vertex samples in $S_h$ are obtained by using different random seeds, one for each $h$. This means the algorithm reaches Step 1.(c) with probability at least $1 - \exp(-\beta^{-8})$. \\
    
    \noindent\textbf{Error 2: The probability $\texttt{findr}(\bS_2)$ fails to reach Step 1.(d) is low}\\
    We know from \Thm{enough_h_t} that many candidate $k$'s exist, and for each $k$, there exist many $(h,k)$-viable vertices in $S_h$. In the \texttt{for} loop on Step 1.(c), $\texttt{findr}(\bS)$ loops over all eligible choices for $k$ to find a ``\emph{good}'' $k$. 
    Let us fix a value of $k$ as suggested by \Thm{enough_h_t}. 
    We have, 
  	\begin{align*}
     \mypr_{\bS_2}\Big(s \in V_{\geq h} \text{ and marked } (h,k)\text{-viable}\Big) & \geq  \mypr_{\bS_2}\Big(s \in V_{\geq h}\Big) \times \mypr_{\bS_2}\Big(t_s \sim [1, \frac{\beta}{\log^2 \beta^{-1}}] \Big)\\
     & = \frac{\beta^2}{\log^2 \beta^{-1}} \times \frac{\beta}{\log^2 \beta^{-1}} = \frac{\beta^3}{\log^4 \beta^{-1}}
    \end{align*}

    \noindent Hence, $$\myex[|s \in V_{\geq h} \text{ and marked } (h,k)\text{-viable}|] \geq \frac{\beta^3}{\log^4 \beta^{-1}}|S_h|.$$
    We use Chernoff Bound once more to bound the probability of discovering at most $12 \beta^4|S_h|$ many $(h,k)$-viable vertices.
	
	\begin{align*}
	\mypr_{\bS_2}\Big(X \leq 12 \beta^4|S_h|\Big) & \leq \exp{\Big(-(12\beta\log^4 \beta^{-1})^2\frac{\beta^3}{2\log^4 \beta^{-1}}|S_h|\Big)} & \text{\quad using } \mypr\Big(X \leq (1-\delta)\myex[X]\Big) \\
    & & \leq \exp\Big(- \frac{\delta^2\myex[X]}{2}\Big)\\
	& = \exp{(-72\beta^5\log^4 \beta^{-1}|S_h|)}
	\end{align*}
	
	This implies we explore at least $12 \beta^4|S_h|$ $(h,k)$-viable vertices with probability at least $1 - \exp{(-72\beta^5\log^4 \beta^{-1}|S_h|)}$. 
	Subsequently, the algorithm sets $k_h$ to a non-zero value after reaching Step 1.(d). 
	We can estimate the probability of the event that the algorithm reaches Step 1.(d) is $1 - \exp{(-\beta^{-8})} - \exp{(-72\beta^5\log^4 \beta^{-1}|S_h|)} \geq 1 - \exp{(-\beta^{-1})}$. Hence, we can bound the probability that the algorithm doesn't reach Step 1.(d) by $\exp{(-\beta^{-1}) }$.  \\

    \noindent\textbf{Error 3: The probability $\texttt{findr}(\bS_2)$ fails to return a $k_h > 0$ is low}\\
    
    This holds trivially from the previous step, as we know from the structural Lemma that many candidate $k$'s exist, and for each $k$, there exist many $(h,k)$-viable vertices in $S_h$. In the \texttt{for} loop on Step 1.(c), $\texttt{findr}(\bS_2)$ loops over all eligible choices for $k$ to find a ``\emph{good}'' $k$. We already showed that the failure probability of that event is low. 
    
    \noindent \textbf{Step 2:} Now we will first argue that the number of $(h,k)$-viable vertices in $V_{\geq h}$ is at least $2\beta^5 n$ with high probability. Once that step is done, we will argue that the number of $(h, k_h)$-viable vertices in $V_h$ is at least $\beta^5\delta n$ with high probability.\\ 
    
    \noindent\textbf{Error 4: If $\texttt{findr}(\bS_2)$ returned $k_h > 0$, then the number of $(h,k)$-viable vertices in $V_{\geq h}$ is less than $2\beta^5 n$ is low} \\
    
    We prove this by showing the contrapositive. The algorithm processed line 1.(d) and which returned some value $k = k_h$. Suppose the number of $(h,k)$-variable vertices in $V_{\geq h}$ is at most $2\beta^5 n$. We will show that the event of line 1.(d) returning $k = k_h$ is very unlikely.
Using the lower bound $|V_{\geq h}| \geq |F_h| \geq \beta n$,
    \begin{align*}
    \myex\Big[|s \in S_h \text{ and } s \text{ is } (h,k) \text{-viable}|\Big] \leq \frac{2\beta^5n}{|V_{\geq h}|}|S_h|\leq  2\beta^4|S_h|.
    \end{align*}

    Let $X_k$ denote the random variable of the number of $(h,\sizethresh)$-viable vertices in $S_h$. Also, note that any sampled vertex in $S_h$ is a $(h,k)$-viable with some probability $p \leq 2\beta^4$ independent of all the other vertices sampled in $S_h$. Thus, by \Fact{kwise:chernoff}, $\mypr[X_k > 12\sizefrac^4|S_h|] \leq 2^{-12\sizefrac^4|S_h|}$.

    Hence, the probability of discovering $12\beta^4|S_h|$ vertices is low,  and therefore the call to $\findr(\bS_2)$ so is the probability that $\findr$ returned and $k_h$  be $k$. Contradiction. \\     
     
    \noindent\textbf{Error 5: If $\texttt{findr}(\bS_2)$ returned $k_h > 0$, then the number of $(h,k_h)$-viable vertices in $V_{h}$ is less than $\delta\beta^5 n$ is low} \\     
    
    From the previous step, we know that the number of $(h,k_h)$-viable vertices in $V_{\geq h}$ is at least $2\beta^5n$. Using \Cor{kwise}, it follows with probability at least $1-n^{-b/4}$, that the number of variable vertices in $V_h$ is at least $\frac{\delta}{2}|(h,k)\text{-viable vertices in } V_{\geq h}| = \delta \beta^5 n$. 

    Finally, we use a union bound over all the $\overline{h}$ phases and all the $1/\rho$ choices for $k$ and all the errors encountered thus far. The total error probability is at most $$\overline{h} \cdot \exp(-\sizefrac^{-8}) + \exp(-1/\sizefrac) + 2^{-12/\sizefrac^4} + n^{-b/4} \leq 4 \exp(-1/\sizefrac).$$
\end{proof}

\section{Proof of \texorpdfstring{\Thm{edgecut:bnd}}{Theorem~\ref{thm:edgecut:bnd}}}

In the standard model, \cite{KSS:21} employs a random seed $\bR$ of length $\frac{n}{\mathrm{poly}(\varepsilon)} \cdot \log n$.\footnote{In the \ourmodel, this takes $\frac{N}{\mathrm{poly}(\varepsilon)} \cdot \log N$ bits.} In their framework, a call to $\findpart(v, \bR)$ internally invokes $\findanchor(v, \bR)$, which in turn calls $\cluster(v)$.

In our setting, however, we invoke $\globpart(\bS_1 \circ \bS_2)$ using $\bS_1 \circ \bS_2$ as the randomness source. This introduces correlation between partitioning steps within $\globpart(\bS_1 \circ \bS_2)$ when it invokes two different $v$ and $u$. While \cite{KSS:21} proves that the expected number of cut edges remains small under the use of $\bR$, our use of $\bS_1 \circ \bS_2$ requires reproving the cut bound. In this section, we give a straightforward adaptation of techniques from \cite{KSS:21} to show a similar bound on the expected number of cut edges under our correlated randomness model. The remainder of this section is devoted to explaining how techniques from \cite{KSS:21} can be adapted to prove this bound. We first overview our proof of the cut bound. Following \cite{KSS:21}, we first express our bound on expected size of the cut set in terms of the expected value of following quantity $$\sum X_h = \sum_{v \in V_h} |\cluster(v) \cap F_h|$$ which measures the number of free vertices clustered at the beginning of phase $h$ with repetitions across all the phases. The key insight is we can mimic the expected value computations from \cite{KSS:21} because our random variables are $b$-wise independent. We start by recalling the main theorem we prove in this section.

\thmedgecutbnd*

We first collect some ingredients we need to prove this theorem. We start by re-purposing some basic results (done in \Prop{bwise:basic}) about a collection of $b$-wise independent random variables that we will use in our final proof. 
\begin{proposition} \label{prop:bwise:basic}
    Let $h^* = \lceil 2/\delta \rceil$ and $b = 4/\sizefrac^{10}$. With probability at least $1 - n^{-b/4}$, there exists a cutoff phase $1 \leq h_c \leq h^*$ such that $|F_{h_c}| \leq \sizefrac n$.
\end{proposition}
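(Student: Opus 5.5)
The plan is to track the free set phase by phase and use that the phase-$h$ coins come from their own seed $\bS_1(h)$. Two facts drive the argument. First, every vertex $v$ with $h_v = h$ is processed in phase $h$. The call $\cluster(v,t_v,k_v)$ always returns a set $C \ni v$, and $F \gets F \setminus C$, so $v \notin F_{h+1}$. Hence
\[
F_{h+1} \subseteq F_h \setminus (V_h \cap F_h), \qquad F_h \subseteq V_{\ge h}.
\]
Second, whether $v \in V_h$ given $v \in V_{\ge h}$ is decided by $H^{(h)}_v$, which is generated from $\bS_1(h)$. This seed is independent of everything that determines $F_h$: the seeds $\bS_1(1),\dots,\bS_1(h-1)$, the walk lengths of earlier phases, and $\bS_2$. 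So conditioned on the history up to phase $h-1$, $F_h$ is a fixed subset of $V_{\ge h}$, and $\{H^{(h)}_v\}_{v\in F_h}$ is a $b$-wise independent $\texttt{Ber}(\delta)$ family. This is exactly the setting of \Cor{kwise} (equivalently \Fact{phase-prob}).

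Step 1 handles a single phase. Condition on any history with $|F_h| > \sizefrac n$, and apply \Cor{kwise} with $F = F_h$ and $c = |F_h|/n \ge \sizefrac$. Except with probability $(b^2/(\sizefrac n\delta^2))^{b/2}$, we get $|V_h \cap F_h| \ge \delta|F_h|/2$. Since $b$, $\delta$ and $\sizefrac$ depend only on $d$ and $\eps$, this failure probability is at most $n^{-b/2+o(1)}$ for large $n$.

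Step 2 iterates and takes a union bound. Union-bounding over the $h^* = \lceil 2/\delta\rceil = O_\eps(1)$ phases, the bad probability is at most $h^* n^{-b/2+o(1)} \le n^{-b/4}$. Off this event, if $|F_h| > \sizefrac n$ for every $h \le h^*$, then
\[
|F_{h+1}| \le (1-\delta/2)|F_h| \quad\text{for all } h < h^*,
\]
and the resulting contradiction would exhibit the required cutoff phase $h_c \le h^*$.

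The main obstacle is the arithmetic in Step 2. Iterating the recursion gives only
\[
|F_{h^*}| \le (1-\delta/2)^{2/\delta} n \approx e^{-1} n,
\]
which is not below $\sizefrac n = \eps n/10$. The same gap appears if one instead bounds $|V_{\ge h^*}|$, whose expectation is $(1-\delta)^{h^*-1}n \approx e^{-2}n$. So the self-clustering of phase-$h$ vertices alone does not suffice with $h^* = \lceil 2/\delta\rceil$; more progress per phase is needed.

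To close this gap I would first try \Thm{findrv2}, which holds on its own high-probability event. Whenever $|F_h| \ge \sizefrac n$, at least $\sizefrac^5\delta n$ vertices of $V_h$ are $(h,k_h)$-viable, and each viable vertex can add up to $\sizefrac^3 k_h$ fresh vertices. The hope is that these clusters remove $\Omega(\delta n/\poly(\sizefrac))$ free vertices per phase, after accounting for overlaps via the inverse-ball bound \Clm{ballsize}, and that this drives $|F_h|$ below $\sizefrac n$ within $2/\delta$ phases. I am not confident this per-phase gain is large enough. If it is not, the argument as written establishes the cutoff only for $h^* = O(\delta^{-1}\log(1/\sizefrac))$ (still $\le \overline{h}$ after adjusting constants), and the statement would need that correction.
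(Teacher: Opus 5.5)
Your diagnosis in the last two paragraphs is correct, and the gap cannot be closed: with $h^*=\lceil 2/\delta\rceil$ the statement is false. Take any input on which every call to \cluster returns a singleton, e.g.\ the edgeless graph on $n$ vertices. There $\trwalk^t\vec v=\vec v$, so no level set of size at least $2$ lies in the support, and $C=\{v\}$ whatever $t_v,k_v$ are. Then $F_h=V_{\ge h}$ exactly. Moreover, $|V_{\ge h^*}|$ is a sum of $b$-wise independent indicators of mean $(1-\delta)^{h^*-1}\ge e^{-3}$, so with high probability $|F_h|>\sizefrac n$ for all $h\le h^*$. \Thm{findrv2} holds on this instance with $k_h=1$ and gives nothing beyond self-clustering, so the rescue in your last paragraph is a dead end.

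The paper's proof does not escape this either. It asserts, via \Cor{kwise} and a union bound, that w.h.p.\ $|V_h|\ge\delta n/2$ for every $h\le h^*$, which would force $\sum_{h\le h^*}|V_h|\ge n$. But \Cor{kwise} applied to a set of size $cn$ only gives $\delta cn/2$, which is just $5\delta^2 n$ for the paper's choice $c=10\delta$. In fact $|V_h|$ concentrates near $\delta(1-\delta)^{h-1}n<\delta n/2$ once $h>1+\delta^{-1}\ln 2$.

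So commit to the corrected statement, e.g.\ $h^*=1+\lceil\delta^{-1}\ln(2/\sizefrac)\rceil$, and enlarge $\overline{h}$ to match. The paper fixes $\overline{h}\le 2/\delta$, so this is more than adjusting constants, but it only costs a $\log(1/\eps)$ factor in the seed length. The cleanest proof then skips the phase-by-phase recursion:
\begin{itemize}
\item $F_{h^*}\subseteq V_{\ge h^*}$.
\item The variables $\prod_{h<h^*}(1-H^{(h)}_v)$ are $b$-wise independent across $v$, since distinct phases use independent seeds $\bS_1(h)$, and each has mean $(1-\delta)^{h^*-1}\le\sizefrac/2$.
\item One application of \Fact{kwise:chernoff} (its moment proof does not use the value of the success probability) gives $|V_{\ge h^*}|\le\sizefrac n$ except with probability $(b^2/(\sizefrac^2 n))^{b/2}\le n^{-b/4}$, once $n\ge b^4\sizefrac^{-4}$.
\end{itemize}

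This route also avoids an inaccuracy in your Step 1: $F_h$ is \emph{not} independent of $\bS_1(h)$. The \findr routine evaluates \cluster$(s,t_s,k)$ with $t_s=T^{(h_s)}_s$ for its sampled vertices $s$. Hence the thresholds $k_{h'}$ with $h'<h$ depend on the phase-$h$ coins of the vertices sampled for those phases. That number grows linearly in $h$ and can exceed $b$, and conditioning on that many coins can destroy the $b$-wise independence of the remaining ones.
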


\begin{proof}
    We prove this result by contradiction. We will show that if $|F_{h^*}| \geq \sizefrac n/2$, then with high probability $|F_{h^* - 1}| \leq \sizefrac n/2$. Since the size of the free sets can only go down, the cutoff phase must be encountered somewhere between phase 1 and phase $h^*$ with high probability. 

    So, let us assume $|F_{h^*}| > \sizefrac n/2 \geq 10 \delta n$. Thus, all the preceding phases also contain at least $N = 10 \delta n$ vertices. Fix one of these preceding phases, say phase $h$. The set $V_h \subseteq V_{\geq h}$ is obtained by having each vertex toss a coin with heads probability $\delta$ and this entire collection of tosses is $b$-wise independent. Therefore, by \Cor{kwise}, we get

    $$Pr(|V_h| \geq \delta n/2) \leq 1 - \left(\frac{b^2}{N \delta^2}\right)^{b/2}.$$

    By a union bound over at most $h^*$ phases, we get $$Pr\Bigg(\forall h \leq h^* \colon |V_h| \geq \delta n/2 \Bigg) \geq 1 - h^* \cdot \left(\frac{b^2}{N \delta^2}\right)^{b/2} \geq 1 - \left(\frac{1}{\sqrt n}\right)^{b/2} = 1 - n^{-b/4}.$$
\end{proof}

This theorem follows by combining the following two claims.

\begin{claim} \label{clm:cutvalbnd}
    $$\EX[\# \text{ of edges cut by } \globpart(\bS_1 \circ \bS_2)] \leq 4096 \phi \sizefrac^{-8} d^2 \Bigg(\sum_{h \leq \overline{h}} \EX\Big[\sum_{v \in V_h} |\mathtt{cluster}(v) \cap F_h|\Big] \Bigg) +  \sizefrac n d/2.$$
\end{claim}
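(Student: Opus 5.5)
We will split the cut edges of $\globpart(\bS_1 \circ \bS_2)$ according to the phase in which they are created, and into two types. Edges cut while clustering vertices of phase $h \le \overline{h}$ form the first type. All other edges form the residual type: either they are created after the cutoff phase, or they are boundary edges left when the leftover free set is handled as singletons or small pieces.

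\emph{Residual edges.} By \Prop{bwise:basic}, with probability at least $1-n^{-b/4}$ there is a cutoff phase $h_c \le \overline{h}$ with $|F_{h_c}| \le \sizefrac n$. On this event every residual edge has an endpoint in $F_{h_c}$, so there are at most $\sizefrac n d$ of them. We will use the slightly more careful accounting of \cite{KSS:21}, which charges each such edge to a single free endpoint, to reach $\sizefrac nd/2$. On the failure event we bound the cut by $nd$ trivially. This contributes $n^{-b/4} \cdot nd$, which is negligible and can be absorbed. Note that this is the only place the $b$-wise independence of the phase coins $H_v^{(h)}$ enters.

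\emph{Phase-$h$ edges.} When $v \in V_h$ is processed, the algorithm adds the components of $C \cap F$, with $C = \cluster(v)$. Every newly cut edge lies either in $\partial C$ or between $C \cap F$ and $C \setminus F$. For edges of the first kind, non-singleton clusters satisfy $\Phi(C) \le \phi$, so $|\partial C| \le \phi d |C|$. The difficulty is that $|C|$ may be much larger than $|C \cap F_h|$. Here we will follow the amortization of \cite{KSS:21}, Lemma 4.x: the clusters chosen at threshold $k_h$ have size between $k_h$ and $2k_h$. A cluster that contributes fewer than $\sizefrac^3 k_h$ new vertices is charged against the viable clusters, since \Thm{findrv2} guarantees at least $\sizefrac^5 \delta n$ viable vertices in $V_h$, each adding at least $\sizefrac^3 k_h$ free vertices. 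The second kind of edge is bounded the same way; it has an endpoint in $C$ and contributes at most $d|C|$. We will instead control it through the conductance of the free part, as KSS do. Altogether the phase-$h$ cut is at most $O(\phi d^2 \sizefrac^{-8}) \cdot X_h$, where $X_h = \sum_{v \in V_h}|\cluster(v) \cap F_h|$.

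Finally, taking expectations and summing over $h \le \overline{h}$ gives the claim with constant $4096$. The main obstacle is the amortization step. In KSS it uses expectations over independent phase and timestep choices: each vertex of $V_{\ge h}$ lands in $V_h$ with probability $\delta$ (\Fact{phase-prob}), and $t_v$ is uniform. We will check that each such expectation involves only $O(|\ball(u)|) \le \ell/\minp$ variables at once (\Clm{ballsize}). Since $b = 4\ell\minp^{-1}$ exceeds this, $b$-wise independence reproduces the computations verbatim. We will also check that the failure event of \Thm{findrv2} is conditioned away with an additive loss folded into $\sizefrac nd/2$.
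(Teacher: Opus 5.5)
Your skeleton matches the paper's. Phases after the cutoff are handled through \Prop{bwise:basic}. In each significant phase, the cut is bounded by (number of clusters) $\times$ (boundary of a cluster of size at most $2k_h+1$ and conductance at most $\phi$). This is then compared with $X_h \gtrsim \sizefrac^{8}\delta k_h n$, which comes from the viable vertices of \Thm{findrv2}, and the failure events are folded into the additive term.

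The genuine gap is your second kind of edge, between $C\cap F$ and $C\setminus F$. Your bound $d|C|$ carries no factor $\phi$. Summed over $|V_h|\approx\delta n$ clusters of size about $2k_h$, it only gives an upper bound of order $d\,\delta k_h n$. That is a ratio of order $d\sizefrac^{-8}$ to $X_h$ instead of $\phi d^2\sizefrac^{-8}$, which ruins the final $\eps nd$ bound. The fallback, ``the conductance of the free part,'' is not available. Nothing controls $\Phi(C\cap F)$: the paper itself notes that $\cluster(v)$ can have low conductance in $G$ but high conductance inside $F$, and the algorithm adds the components of $C\cap F$ regardless.

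The missing idea is that these edges never need separate payment. An edge $\{x,y\}$ becomes cut exactly when the first of its endpoints, say $y$, leaves $F$ through some $C_v$ with $x\notin C_v$. (If both endpoints leave together, they lie in one component of $C_v\cap F$ and the edge is not cut.) At that moment the edge lies in $E(C_v,\overline{C_v})$, and both endpoints were still free. So if $x\in C\cap F$ and $y\in C\setminus F$, the edge was already counted in $E(C',\overline{C'})$ for the earlier cluster $C'$ that removed $y$. Two consequences follow:
\begin{itemize}
\item The phase-$h$ cost is at most $\sum_{v\in V_h}|E(C_v,\overline{C_v})|$.
\item Every edge cut from phase $h_c$ on has both endpoints in $F_{h_c}$. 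There are at most $d|F_{h_c}|/2\le\sizefrac nd/2$ such edges, which is the precise form of your ``charge each edge to a single free endpoint.''
\end{itemize}
The paper instead asserts the per-cluster bound $|E(C_v\cap F_v,\overline{C_v\cap F_v})|\le 2d\,|E(C_v,\overline{C_v})|$. That bound does not by itself account for these internal edges; the observation above is what makes the phase-by-phase bound sound.

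There are also two smaller omissions.
\begin{itemize}
\item Singleton outputs $C_v=\{v\}$ with $v\in F_v$ cut up to $d$ edges with no conductance guarantee. The paper pays for them via $|E(C_v,\overline{C_v})|\le 2\phi k_h d+d\le 4\phi k_h d^2$, which is where the $d^2$ in the claim comes from. You assert $d^2$ without this accounting.
\item Charging the non-viable clusters of phase $h$ against the viable ones needs an upper bound on the number of clusters, $|V_h|\le 2\delta n$. The paper gets this from \Cor{kwise}, so the $b$-wise independence of the phase coins is used here too, not only for the residual edges as you state.
\end{itemize}
Finally, the ball-size independence you single out as the main obstacle ($b=4\ell/\rho\ge|\ball(u)|$) is what the paper uses for \Clm{newcharging}, which bounds $\sum_h\EX[X_h]$. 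It is not the crux of this claim.
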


Consider the $h${th} phase. The quantity $\EX\Big[\sum_{v \in V_h} |\mathtt{cluster}(v) \cap F_h|\Big]$ counts (with repetitions) the total number of free vertices clustered during the $h${th} phase. Note that repeats are inevitable as it is possible for calls to $\cluster(u)$ and  $\cluster(v)$ (where $u,v \in V_h$) both to have some overlap within the set $F_h$. The claim below asserts that summed up over all the phases, this cannot happen too often and that it is much better than the naive bound $$ \sum_{h < \overline{h}} \EX[\sum_{{v \in V_h}} |\mathtt{cluster}(v) \cap \free{h}|)] \leq \sum_{h < \overline{h}} \sum_{v \in V_h} |\mathtt{cluster}(v) \cap F_h| \leq \sum_{v \in V} |\ball(v)| \leq \ell \rho^{-1} \cdot n.$$

\begin{claim}\label{clm:newcharging} 
$$ \sum_{h < \overline{h}} \EX[\sum_{{v \in V_h}} |\mathtt{cluster}(v) \cap \free{h}|)] \leq 4n $$
\end{claim}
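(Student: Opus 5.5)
We will prove \Clm{newcharging} by a charging argument that swaps the order of summation from clustering vertices $v$ to clustered vertices $u$. For a free vertex $u \in \free{h}$, we count how many $v \in V_h$ have $u \in \cluster(v) \cap \free{h}$. Any such $v$ lies in $\ball(u)$, since $\cluster(v)\setminus\{v\} \subseteq \supp(\trwalk^{t_v}\vec v)$. Rewriting,
\[
\sum_{h<\overline h}\sum_{v\in V_h}|\cluster(v)\cap\free h| \;=\; \sum_{u\in V}\sum_{h<\overline h} \mathbf 1[u\in\free h]\cdot Z_{u,h},
\]
where $Z_{u,h} := |\{v\in V_h\cap\ball(u) : u\in\cluster(v)\}|$. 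The goal is to show that, for each fixed $u$, the expectation of the inner sum over $h$ is at most $4$.

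Fix $u$. The plan follows the KSS argument for a single vertex: in each phase in which $u$ is still free, the number of hits on $u$ should be dominated (in expectation) by the event that $u$ gets clustered and leaves $F$. Concretely, consider the first phase $h$ in which some $v\in V_h\cap\ball(u)$ clusters $u$. After that phase, $u\notin \free{h+1}$, so all later terms vanish. The whole sum for $u$ therefore reduces to $Z_{u,h}$ at that single phase $h$. We must bound $\EX[Z_{u,h}\mid Z_{u,h}\ge1]$ and combine over phases.

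The key point is that this depends only on the random variables $\{H_w^{(h')},T_w^{(h')}\}$ for $w\in\ball(u)$, together with $\bS_2$. By \Clm{ballsize}, $|\ball(u)|\le\ell/\minp$. Since $b=4\ell\minp^{-1}\ge|\ball(u)|$, these collections restricted to $\ball(u)$ are fully independent, so the KSS computation goes through verbatim.

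There is one subtlety: $\free{h}$ depends on the randomness of vertices outside $\ball(u)$. We handle this by conditioning on everything up to phase $h-1$ together with $\bS_2$ (hence on $k_h$), and then using only the fresh phase-$h$ coins $H^{(h)}_w, T^{(h)}_w$ for $w\in\ball(u)$. These coins are fully independent of one another, but we must check they are independent of the conditioning. The $b$-wise generators for different phases use disjoint seeds $\bS_1(h)$, so this holds.

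Under this conditioning, each $w\in\ball(u)\cap V_{\ge h}$ independently enters $V_h$ with probability $\delta$ (\Fact{phase-prob}). It then picks $t_w$ uniformly, and $k_w=k_h$ is fixed. Let $p_w$ be the probability that $u\in\cluster(w)$ given $w\in V_h$. Then $Z_{u,h}$ is a sum of independent Bernoulli$(\delta p_w)$ variables with mean $\mu_h=\delta\sum_w p_w$, and $\Pr[Z_{u,h}\ge1]\ge 1-e^{-\mu_h}$. Moreover, $\sum_w p_w$ is essentially unchanged across phases while $w$ remains in $V_{\ge h}$.

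The standard inequality we need is $\EX[Z]\le 2\Pr[Z\ge1]$, which holds whenever $\mu_h\le 1$. This condition is guaranteed by $\delta\cdot|\ball(u)|\le\delta\ell\minp^{-1}\le d^{-4}\eps^{70}\ll1$, using the parameter choices $\delta=\deltaval$ and $\ell/\minp = d^{66}\eps^{-3030}$.

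Then
\[
\EX\Big[\sum_h \mathbf 1[u\in\free h]Z_{u,h}\Big]\le 2\sum_h\Pr[u\in\free h,\ u \text{ clustered in phase } h]\le 2\cdot\Pr[u \text{ ever clustered}]\le 2.
\]
The remaining work is to add slack for the event that the last phase leaves $u$ unclustered, for which we use $\mathbf 1[u\in\free{h}]$ only; this gives the bound $4$. Summing over $u$ yields $4n$.

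The main obstacle is justifying the conditional independence in the previous step. Specifically, whether $u\in\cluster(w)$ for $w\in V_h$ involves only $t_w,k_h$ and deterministic diffusions, but $u\in\free h$ involves global earlier phases. We will rely on disjoint per-phase seeds to separate the two. We will also verify that conditioning on $w\in V_{\ge h}$ does not bias the phase-$h$ coins, which again uses the phase-wise independent seeds of \Rem{s1-randomness}.
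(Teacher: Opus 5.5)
Your proposal follows the same route as the paper. You charge each hit to the clustered vertex $u$ and observe that any $v$ with $u \in \cluster(v)$ lies in $\ball(u)$. You use $|\ball(u)| \le \len\minp^{-1} \le b$ to get full independence inside the ball. You then compare the expected number of hits on $u$ in phase $h$ with the probability of at least one hit, using $\delta\len\minp^{-1} = d^{-4}\eps^{70} \ll 1$. Finally you sum over phases, which are disjoint because a hit removes $u$ from the free set. Your execution is cleaner than the paper's, which conflates the random number of actual clusterers with the number of candidates. Treating $Z_{u,h}$ as a sum of independent Bernoulli$(\delta p_w)$ variables and using $\EX[Z] \le 2\Pr[Z \ge 1]$ for mean at most $1$ gives the constant $2$ directly. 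So your closing remark about ``adding slack'' for a last phase is not needed: a vertex that is never clustered receives no charge.

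The step that does not go through as you justify it is the independence claim; the paper's proof skips this point entirely. Conditioning on $\bS_2$ and on phases $1,\dots,h-1$ does not fix $k_h$. \findr evaluates $\cluster(s,t_s,k)$ with $t_s = T^{(h_s)}_s$ and $h_s \ge h$, so $k_h$ depends on the phase-$h$ (and later) coins of the sampled vertices. Through $k_1,\dots,k_{h-1}$, so does $F_h$. Disjoint per-phase seeds therefore do not make the phase-$h$ coins in $\ball(u)$ independent of what you condition on.

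One way to repair this is as follows. Let $M$ be the set of all vertices sampled by \findr over all rounds; it is determined by $\bS_2$ and has $|M| \le \sizefrac^{-10}\overline{h} = \poly(d/\eps)$. Condition on $\bS_2$, on $\bS_1(1),\dots,\bS_1(h-1)$, and on all coins (in every phase) of the vertices in $M$; this determines $F_h$ and $k_h$. Take $b \ge \len\minp^{-1} + |M|$, which is still $\poly(d/\eps)$, so the seed length bound is unaffected. Then the phase-$h$ coins on $\ball(u)\setminus M$ remain fully independent and independent of this conditioning, and your argument bounds the expected hits on $u$ from clusterers outside $M$ by $2$. Clusterers in $M$ contribute at most $|M|(2\minp^{-1}+1) = O_{d,\eps}(1)$ in total over all $u$ and all phases. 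The gap between $2n$ and $4n$ absorbs this once $n$ exceeds a $\poly(d/\eps)$ threshold.
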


We finish the proving \Thm{edgecut:bnd} first. Thereafter, we will prove the claims.

\begin{proof}[Proof of \Thm{edgecut:bnd}.] 

Take $\eps > 0$ to be sufficiently small and let us upper bound the number of edges cut by a call to $\globpart(\bS_1 \circ \bS_2)$. Following \Clm{cutvalbnd}, we have.

\begin{flalign*}
    \EX[\# \text{ of cut edges}] &\leq 4096n\phi \sizefrac^{-8} d^2 \Bigg(\sum_{h \leq \overline{h}} \EX\Big[\sum_{v \in V_h} |\cluster(v) \cap F_h|\Big] \Bigg) +  \sizefrac n d/2 \\
    &\leq 16,384 \phi\sizefrac^{-8}d\cdot nd + \sizefrac nd/2 &&\text{By \Clm{newcharging}} \\
    &\leq \eps nd &&\text{Using $\phi = \frac{\eps^{10}}{d}, \sizefrac = \eps/10$}\\
\end{flalign*}
\end{proof}

Now, let us prove \Clm{cutvalbnd}.

\begin{proof}[Proof of \Clm{cutvalbnd}.] 
    We call a phase $h$ significant if $|F_h| \leq \sizefrac n$. By definition, the contribution to the cut-bound from the insignificant phases is at most $\sizefrac nd$. Now, we control the contribution to cut-bound from the significant phases. To this end, let $h_c$ denote the last significant phase. By \Prop{bwise:basic}, the probability that $h_c \leq 2/\delta$ is at least $1 - n^{-b/4}$.

    Now we do a phase-by-phase analysis. Fix any significant phase $h \leq h_c$. Condition on any choice of randomness in $\bS_1 \circ \bS_2$ that results in $|F_h| \geq \sizefrac n$. Note that all calls to $\cluster(v,t_v,k_v)$ issued in this phase return a set $C_v$ which satisfies

    $$|C_v| = \begin{cases}
        = 1 \ \text{ if } C_v = v\\
        \leq 2 k_h \ \text{Otherwise} 
    \end{cases}$$

    Furthermore, by an application of \Cor{kwise}, with probability at least $1 - n^{-b/4}$, at most $\frac{3 \delta n}{2} \leq 2 \delta n$ calls to $\cluster(v)$ are issued in all phases $h$ preceding $h_c$. Thus, the unclustered set when phase $h$ satisfies 
    
    \begin{equation} \label{eq:fhplus1}
    |F_{h+1}| = \sizefrac n - \sum_{v \in V_h} |C_v| \geq \sizefrac n - 2\delta n \cdot 2 k_h \geq \sizefrac n/2 \text{ vertices.}
    \end{equation}
    
    We denote by $F_v$ the set of free vertices which remain when the vertex $v$ is processed by $\globpart(\bS_1 \circ \bS_2)$. Toward controlling the contribution to the cut-bound in the $h${th} phase, we would like to control $\sum_{v \in V_h} |C_v \cap F_v|$. Recalling that $|V_h| \leq 2 \delta n$, we have for each $v \in V_h$, $$|E(C_v, \overline{C_v})| \leq 2 \phi k_h d + d = 4 \phi k_h d^2.$$

    Now we consider the contribution to the number of significant edges cut in phase $h$ from a vertex $v \in V_h$. This is seen to be

     \begin{align*} 
    e_h(v) \eqdef |E(C_v \cap F_v, \overline{C_v \cap F_v})| \ \ &\substack{\bone\\ \leq} \sum_{e = (x,y) \in E(C_v, \overline{C_v)}} (d(x) + d(y))  \\
    &\leq \sum_{e = (x,y) \in E(C_v, \overline{C_v)}} 2d  \\
    &\leq 2d \cdot |E(C_v, \overline{C_v})| \\
    &\leq 8 \phi k_h d^2
    \end{align*}

    Here, $\bone$ follows because for every edge $(x,y) \in E(C_v, \overline{C_v})$, we can have the entire set $E(v=x, N(x))$ can become a subset of the edges crossing from $C_v \cap F_v$ to its complement (and ditto for $E(y, N(y))$).

    Therefore, with probability at least $1 - 2 \cdot n^{-b/4}$, the number of significant edges cut in this phase is given as  

    \begin{equation} \label{eq:sig:edges:bound}
        e_h \eqdef \sum_{v \in V_h} e_h(v) \leq 8 \phi k_h d^2 \cdot |V_h| \leq 16 \phi k_h d^2 \delta n
    \end{equation}

    Next, we show another way to think about the quantity $\delta n k_h$. The key random variable to track the number of free vertices returned in each call to $\cluster(v)$ for $v \in V_h$. That is, we define $X_h \eqdef \sum_{v \in V_h} |\mathtt{cluster}(v) \cap F_v|$. By \Eqn{fhplus1}, after phase $h$ is processed, there are still at least $\sizefrac n/2$ vertices left. Denote by $\cE$ the event that at least $\sizefrac^5 \cdot\delta n/32$ vertices in $V_h$ are viable. By \Thm{findrv2}, the event $\cE$ occurs with probability at least $1 - \exp(1/\eps)$. Under the event $\cE$, we note

    \begin{align*}
        X_h &= \sum_{v \in V_h} |C_v \cap F_v| \geq \sum_{\substack{v \in V_h \\v \text{ viable } }}|C_v \cap F_v| \geq \sizefrac^5 \delta n/32 \cdot \sizefrac^3 k_h/8 &&\text{Definition of viable}\\
        &= \frac{\sizefrac^8 \delta k_h n}{256}
    \end{align*}

    \noindent Together with \Eqn{sig:edges:bound}, this gives $$e_h \leq 16 d^2 \cdot \frac{256 X_h}{\sizefrac^8} \leq 4096 \phi \sizefrac^{-8} d^2 X_h \eqdef \alpha_h$$ holds with probability at least $1 - 2n^{-b/4} - \exp(1/\eps)$. Let us call the event under which the number of significant edges cut in a significant phase $h \leq h_c$ is at most $\alpha_h$ by $\cF_h$. We can now write

    \begin{align*}
        \sum_h \EX[\textrm{\# significant edges cut}] &\leq 4096 \phi \sizefrac^{-8} d^2 \sum_h \EX[X_h | \cF_h] \cdot Pr(\cF_h) + \Pr(\overline{\cF_h}) \cdot nd \\
        &\leq 4096 \phi \sizefrac^{-8} d^2 \sum_h \EX[X_h] + (\overline{h} \cdot (\exp(1/\eps) + 2n^{-b/4}) \cdot nd \\
        &\leq 4096 \phi \sizefrac^{-8} d^2 \Bigg(\sum_{h \leq \overline{h}} \EX\Big[\sum_{v \in V_h} |\mathtt{cluster}(v) \cap F_h|\Big] \Bigg) + \sizefrac nd/2
    \end{align*}
\end{proof}

To finish up, we prove \Clm{newcharging}. 

\begin{proof}[Proof of \Clm{newcharging}.]
To prove this claim, we follow a charging argument similar to one in \cite{KSS:21}. 
We charge one unit to every vertex $v$ in $\texttt{cluster}(v) \cap F_h$ when $u$ is processed in $\texttt{globalPartition}(\bS_1 \circ \bS_2)$. 
This essentially implies a vertex $u$ receives charge in at most one phase, precisely when it leaves the free set. 
The total amount of charge is the quantity on the left-hand side of the inequality that we aim to bound. 
We will complete our argument by showing that the expected amount of charge received by any vertex is at most $4$ units. 

Let $X_u$ be the random variable that denotes the charge a fixed vertex $u$ receives. We compute the expectation of the random variable in the following way:
$$ \myex[X_u] = \sum_h \myex[X_u| u \text{ received charge in phase } h] \cdot \mypr[u \text{ received charge in phase } h]$$
Our aim is to show $\myex[X_u| u \text{ received charge in phase } h] \leq 4$, which implies the desired inequality. \\

Now, $u$ receives charge in phase $h$ when there exists a $v \in V_h$ such that $u \in \texttt{cluster}(v)$. 
The charge received by $u$ equals $c \eqdef |S(u)|$ where $S(u) = \{v \in V_h \colon \cluster(v) \ni u \}$. Observe that, $v \in \ball(u)$ and by \Clm{ballsize} $c = |S(u)| \leq \len\minp^{-1}$. Recall that the collection of phase $h$ Bernoullis, $\{B^h(w)\}_{w \in V}$ is $b$-wise independent with $b = \frac{4 \ell}{\rho} \geq 4c$. Thus, for any $v \in S(u)$, the events that $v \in V_h$ are all independent. Therefore, we can bound

$$\Pr_{\bS_1 \circ \bS_2}\Bigg[ u \text{ received charge in phase } h \Big\vert u \in F_h \Bigg] = 1 - (1-\delta)^c .$$

Moreover, we also note

\begin{align*}&  \ \ \ \delta c \leq \delta \len\minp^{-1} = (d^{-4}\cdot \varepsilon^{100}) \leq 1/2 \qquad \text{and} \\
&\therefore (1-\delta)^c \leq 1-\delta c + (\delta c)^2 \leq 1 - \delta c/2
\end{align*}
The above two together imply, 
$$\Pr_{\bS_1 \circ \bS_2}\Bigg[ u \text{ received charge in phase } h \Big\vert u \in F_h \Bigg] \geq \delta c/2$$ 
Note that
$$\EX\Bigg[X_u \Big\vert u \in F_h \Bigg] = \sum_{b > 0} {c \choose b} \delta^b \leq \sum_{b > 0} (\delta c)^b \leq 2\delta c$$
Now observe 
\begin{align*}
&\bone. \quad \EX\Bigg[\Big(X_u \big\vert u \text{ received charge in phase } h\Big) \Bigg\vert u \in F_h \Bigg] \leq (2\delta c)/(\delta c/2) \leq 4 \\
&\btwo. \quad \EX\Bigg[X_u \big\vert u \notin F_h \Bigg] = 0
\end{align*}

\noindent This means

$$\myex[X_u| u \text{ received charge in phase } h]\leq 4$$
\end{proof}

\section{Proof of \texorpdfstring{\Thm{main:huge-label}}
{Theorem~\ref{thm:main:huge-label}}}

\begin{proof}
    By construction, all the randomness of the $\globpart(\bS_1 \circ \bS_2)$ procedure lies in $\bS_1$ and $\bS_2$. As noted in \Rem{s1-randomness}, the total amount of randomness consumed by $\bS_1$ is $O(\overline{h} b \cdot \log \ell \cdot \log N) = O\Big(\frac{\log N}{\poly(\eps)}\Big)$.

    And as one notes from the code of $\findr$, the number of random bits used in $\bS_2$ is also $O\Big(\frac{\log n}{\poly(\eps)}\Big)$. This proves that the total number of bits used by $\globpart$ is of the same order. Moreover, by \Thm{findrv2}, \Thm{edgecut:bnd}, and \Rem{remainingtasks} $\globpart(\bS_1 \circ \bS_2)$ simulates $\globpart(\bR)$ algorithm from \cite{KSS:21} with probability at least $1 - \exp(1/\eps)$ which means $\globpart(\bS_1 \circ \bS_2)$ implements a bonafide $\poly(\eps^{-1})$ time partition oracle. 
\end{proof}

\begin{remark}
Note that \Thm{main} follows as an immediate corollary of the preceding result by setting $N = n$.
\end{remark}

\section{Acknowledgement}
Part of this project began when Abhiruk Lahiri was at Charles University in Prague, Czechia. He would like to thank Zdenek Dvorak for several discussions on minor-free graphs that continued beyond his stay in Prague. 

\bibliographystyle{alpha}

\end{document}